\newtheorem{theorem}{Theorem}[section]
\newtheorem{lemma}[theorem]{Lemma}
\newtheorem{corollary}[theorem]{Corollary}
\newtheorem{definition}{Definition}
\newtheorem{remark}{Remark}
\DeclareMathOperator{\diag}{diag}
\title{Nonsingular (Vertex-Weighted) Block Graphs\thanks{This work is supported by the Joint NSFC-ISF Research Program (jointly funded by
the National Natural Science Foundation of China and the Israel Science Foundation (Nos.
11561141001, 2219/15), the National Natural Science Foundation of China (No.11531001).}}
 \author{Ranveer Singh\thanks{ Technion-Israel Institute of Technology, Haifa 32000, Israel. Email: \texttt{ranveer@iitj.ac.in} (Corresponding author)}\\
  Cheng Zheng\thanks{Technion-Israel Institute of Technology, Haifa 3200003, Israel. Email: \texttt{cheng.zheng@campus.technion.ac.il}}\\
Naomi Shaked-Monderer\thanks{The Max Stern Yezreel Valley College, Yezreel Valley 1930600, Israel. Email: \texttt{nomi@technion.ac.il}}\\
Abraham Berman\thanks{Technion-Israel Institute of Technology, Haifa 3200003, Israel. Email: \texttt{berman@technion.ac.il}}}
\begin{document}

        \maketitle
\begin{abstract}
A graph $G$ is \emph{nonsingular (singular)} if its adjacency matrix $A(G)$ is nonsingular (singular). In this article, we consider the nonsingularity of block graphs, i.e., graphs in which every block is a clique. Extending the problem, we characterize nonsingular vertex-weighted block graphs in terms of reduced vertex-weighted graphs resulting after successive deletion and contraction of pendant blocks. Special cases where nonsingularity of block graphs may be directly determined are discussed.
\end{abstract}
\emph{Key words}. Block, Block graph, Nonsingular graph, Vertex-weighted graph\\
\emph{AMS Subject Classifications}. 15A15, 05C05.

\section{Introduction}
In 1957 Collatz and Sinogowitz proposed the problem of characterizing nonsingular graphs, i.e, graphs whose adjacency matrix is nonsingular \cite{von1957spektren}. 
This problem is of much interest in various branches of science, in particular  quantum chemistry, H\"{u}ckel molecular orbital theory  \cite{gutman2011nullity, lee1994chemical} and social networks theory \cite{leskovec2010signed}. Significant work was done towards a solution to this problem for special classes of undirected graphs, such as trees, unicyclic and bicyclic graphs \cite{cvetkovic1980spectra, fiorini2005trees, singh2018b, gutman2001nullity, hu2008nullity, singh2017characteristic, nath2007null, bapat2011note,
berman2008upper,
xuezhong2005nullity, sciriha2007characterization, sciriha1998construction}. In particular, a tree is nonsingular if and only if it has a perfect matching  \cite{cvetkovic1972algebraic}. Block graphs are a natural generalization of trees. A \emph{block} in a graph is a maximal connected subgraph with no cut-vertex. A \emph{block graph} is a graph in which each block is a clique (i.e., a complete subgraph), see \cite[p. 15]{west2001introduction}, \cite{bapat2014adjacency}. In this article we study nonsingularity of block graphs.

It turns out that in order to characterize nonsingular block graphs, it is useful to consider vertex-weighted graph.
A \emph{vertex-weighted graph} is a pair $(G, x),$ where $G=(V(G),E(G))$ is a  simple graph with vertex set $V(G)=\{1, \hdots, n\}$, edge set $E(G)$, and $x\in \mathbb{R}^n$ is a vector of vertex weights, $x_i$ is the weight of vertex $i$.  A graph $G$ is the vertex-weighted block  graph $(G, o)$, where $o$ is the zero vector. The adjacency matrix $A(G, x)$ of $(G, x)$
is given  by
$$A(G, x)=A(G)+\diag(x),$$ where $\diag(x)$ is a diagonal matrix whose $i$-th diagonal entry is $x_i$. If $(G,x)$ is a vertex-weighted graph, and
$H$ is a subgraph of $G$, we denote by $x^H$ the restriction of the vector $x$ to the vertices of $H$.
We refer to $(H,x^H)$ as a  subgraph of $(G,x)$, and if $H$ is  a
component of $G$ we refer to $(H,x^H)$ as a  component  of $(G,x)$.

 A vertex-weighted block  graph $(G, x)$ is \emph{nonsingular
(singular)} if $A(G, x)$ is nonsingular (singular).
In Section 2, we give a necessary and sufficient condition for a vertex-weighted block graph to be singular in terms of its reduced graphs resulting after successive contraction and deletion of pendant blocks. We then, in Section 3, present several families of nonsingular block graphs.  In Section 4, we show that replacing edge blocks by paths of even order preserve nonsingularity/singularity. 

The following terms and notations are used in the paper. A graph $G$ is a \emph{coalescence (at the vertex $v$)} of  two disjoint graphs $G_1$ and $G_2$ if it is attained by identifying a vertex $v_1\in V(G_1)$
and a vertex $v_2\in V(G_2)$, merging the two vertices into a single vertex $v$. We use $J,j,O,o,w$ to denote an all-ones matrix, an all-ones column vector, a zero matrix, a zero column vector and a  $(0,1)$-vector of suitable order, respectively.
 The standard basis vectors in $\mathbb{R}^n$ are denoted by $e_1, \dots, e_n$.
 A clique on $n$ vertices is denoted by $K_n$.  If $Q$ is a subgraph of $G$, then $G \setminus Q$ denotes the induced subgraph of $G$ on the vertex subset $V(G)
 \setminus V(Q)$. If $Q$ consists of a
single vertex $v$ we will write $G\setminus v$
for $G\setminus Q$. 
The determinant of a graph $G$ is $\det(G)=\det(A(G))$.
For a nonzero $\alpha \in \mathbb{R}$,  we use in this paper the following notation:
\[\alpha^{1/2}=\begin{cases}
\sqrt{\alpha}& \mbox{if $\alpha>0$}\\
i\sqrt{|\alpha|}&\mbox{if $\alpha<0$}.
\end{cases}\]
For a diagonal matrix $D$ with nonzero real diagonal entries, $D^{1/2}$ and $D^{-1/2}$ are interpreted accordingly.

\section{Characterizing nonsingular vertex-weighted block graphs}
We start with a complete characterization of nonsingular vertex-weighted complete graphs,
and some implications for vertex-weighted graphs that have a  pendant block  which is a clique. Note that  elementary row and column operations do not change the rank of a matrix, and we use
this fact in checking the singularity of $A(G,x)$. In particular,
simultaneous permutations of rows and columns of $A(G,x)$
do  not change the rank, thus in checking whether a
 vertex-weighted block  graph $(G, x)$ is singular or not we may relabel the vertices of $G$, and reorder $x$ accordingly, as convenient.

\begin{theorem}\label{l1}
Let $x\in \mathbb{R}^n$.
\begin{enumerate}
\item If exactly one of $x_1,\hdots, x_n$ is equal to 1, then $(K_n, x)$ is nonsingular.
\item If at least two of $x_1,\hdots, x_n$ are equal to 1, then $(K_n, x)$ is singular.
\item  If $x_i\ne 1$, $i=1,\hdots,n$, let \begin{equation}t(x)=\sum_{i=1}^n\frac{1}{1-x_i}. \label{t(x)}\end{equation}
then
\begin{enumerate}
\item $(K_n, x)$ is nonsingular if and only if $t(x) \neq 1$.
\item if $(K_n, x)$ is singular, then  for any vector $y\in \mathbb{R}^{n+1}$ such that $y_i=x_i$, $i=1,\hdots, n$, and $y_{n+1}\ne 1$,  the graph $(K_{n+1}, y)$ is
    nonsingular.  \end{enumerate}
\item  If $x_i\ne 1$ for $i=1,\hdots, n$  and $(K_n, x)$ is nonsingular, then  any  matrix $M$ of the form
\[M=\begin{bmatrix}
A(K_n, x) & j &O^T\\ j^T & \alpha&w^T\\O&w&B
\end{bmatrix},\]
can be transformed, using elementary row and column operations,  to the following matrix
\[\begin{bmatrix}
A(K_n, x) & o &O^T\\ o^T & \alpha+\gamma (K_n, x)&w^T\\O&w&B
\end{bmatrix},\]
where
 \begin{equation}\label{valueofgamma}
 \gamma (K_n, x)= \begin{cases}
-\frac{t(x)}{t(x)-1} & \mbox{if $x_i\neq 1,~ i=1,\hdots n$},\\
 -1 & \mbox {if exactly one of $x_1, \hdots,x_n $ is equal to 1.}
\end{cases}
 \end{equation}
\end{enumerate}
\end{theorem}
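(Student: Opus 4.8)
The plan is to base everything on the identity $A(K_n,x)=jj^T-D$, where $D=\diag(1-x_1,\dots,1-x_n)=I-\diag(x)$ (using $A(K_n)=J-I$). Parts 1 and 2 are then pure row reduction. If $x_k=x_\ell=1$ for two distinct indices $k,\ell$, rows $k$ and $\ell$ of $A(K_n,x)$ both equal $j^T$, so the matrix is singular, which is part 2. For part 1, relabel so that $x_n=1$; row $n$ is again $j^T$, and subtracting it from each of the other rows turns $A(K_n,x)$ into a block-triangular matrix with diagonal blocks $\diag(x_1-1,\dots,x_{n-1}-1)$ and $[\,1\,]$, whose determinant equals $\prod_{i=1}^{n-1}(x_i-1)\neq 0$ since $x_i\neq 1$ for $i<n$.

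For part 3(a), I would apply the matrix determinant lemma to $A(K_n,x)=-D+jj^T$, now with $D$ invertible: $\det A(K_n,x)=\bigl(1-j^TD^{-1}j\bigr)\det(-D)=(-1)^n\bigl(\prod_{i=1}^n(1-x_i)\bigr)\bigl(1-t(x)\bigr)$, because $j^TD^{-1}j=\sum_i 1/(1-x_i)=t(x)$. Since $\prod_i(1-x_i)\neq 0$, this is nonzero exactly when $t(x)\neq 1$. Part 3(b) is then immediate: if $(K_n,x)$ is singular then $t(x)=1$, so for $y$ as described $t(y)=t(x)+1/(1-y_{n+1})=1+1/(1-y_{n+1})\neq 1$, and $(K_{n+1},y)$ is nonsingular by 3(a).

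Part 4 is the computational core. Write $M$ in the three block rows/columns of sizes $n,1,k$, with $M_{11}=A(K_n,x)=:P$. Because the $(1,3)$ and $(3,1)$ blocks of $M$ vanish, the congruence-type transformation $S^TMS$, where $S$ is the identity except for the block $-P^{-1}j$ in position $(1,2)$, clears the $j$ to the right of $P$ and the $j^T$ below it, leaves $P$, $w^T$, $w$, $B$ unchanged, and replaces $\alpha$ by $\alpha-j^TP^{-1}j$; since $S$ is unit upper triangular this is a product of elementary row and column operations. It remains to evaluate $-j^TP^{-1}j$. When all $x_i\neq 1$, Sherman--Morrison applied to $(-D+jj^T)^{-1}$ with $s:=j^TD^{-1}j=t(x)$ gives $j^TP^{-1}j=-s-s^2/(1-s)=s/(s-1)=t(x)/(t(x)-1)$, hence $\gamma(K_n,x)=-t(x)/(t(x)-1)$. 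When exactly one weight, say $x_n$, equals $1$, solving $Pz=j$ directly — the equation is $Dz=(j^Tz-1)j$, whose $n$-th coordinate forces $j^Tz=1$, whence $z=e_n$ — gives $j^TP^{-1}j=1$ and $\gamma(K_n,x)=-1$.

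The step I expect to be the main obstacle is part 4: getting the bookkeeping of the block elimination right so that only the $(2,2)$ entry changes, and then carrying out the two explicit evaluations of $j^TP^{-1}j$ so that both land on the single formula for $\gamma(K_n,x)$. Everything else reduces to a few lines of row reduction and one application of the matrix determinant lemma.
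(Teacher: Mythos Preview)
Your proposal is correct in all four parts. Parts 1, 2, and 3(b) match the paper's argument essentially line for line (modulo which index carries the weight $1$). The genuine differences are in 3(a) and in the two computations of $j^TP^{-1}j$ in part 4.

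For 3(a), the paper factors $A(K_n,x)=J-D=D^{1/2}(pp^T-I)D^{1/2}$ with $p=D^{-1/2}j$ and reads off the eigenvalues of $pp^T-I$; this forces them to adopt the convention $\alpha^{1/2}=i\sqrt{|\alpha|}$ for negative $\alpha$ so that the factorization survives when some $x_i>1$. Your matrix determinant lemma argument sidesteps this complex detour entirely and gives the explicit determinant formula $(-1)^n\bigl(\prod_i(1-x_i)\bigr)(1-t(x))$ in one line, which is a small net gain. For part 4 in the case where all $x_i\neq 1$, the paper uses the same $pp^T$ factorization to invert $A(K_n,x)$ and then evaluates $j^TA(K_n,x)^{-1}j$; your direct Sherman--Morrison computation is the same calculation without the intermediate change of variables. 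In the ``exactly one $x_i=1$'' sub-case the paper does not compute $P^{-1}j$ at all: it simply subtracts the row and column indexed by the vertex with weight $1$ from the $(n+1)$-st row and column, which kills the off-diagonal $j$, $j^T$ and produces the $-1$ on the diagonal by inspection. Your approach---solving $Pz=j$ and finding $z=e_n$---is equally short and has the mild conceptual advantage that both sub-cases are handled by the single identity $\gamma=-j^TP^{-1}j$, whereas the paper treats them by genuinely different elementary operations.
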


\begin{proof}
Let $D=\diag(j-x)$. Then \[
A(K_n, x)=J-D.
\]

\begin{enumerate}
\item  Without loss of generality, let $x_1=1$. By subtracting the first row from the next $n-1$ rows we get that $A(K_n,x)$ is row-equivalent to the matrix
$$\begin{bmatrix}
1 & 1 & \hdots & 1\\ 0 &  x_2-1  &\ddots &\vdots \\ \vdots & \ddots & \ddots  & 0\\0 &\hdots & 0&  x_n-1
\end{bmatrix},$$ whose determinant $\prod_{i=2}^n(x_i-1)$ is nonzero.
\item In this case two rows (or columns) are equal.
\item \begin{enumerate}
\item  Denote \begin{equation}\label{pdef} p=D^{-1/2}j. \end{equation}  Then
\begin{eqnarray}
J-D&=&D^{1/2}(D^{-1/2}JD^{-1/2}-I)D^{1/2}\nonumber\\
&=& D^{1/2}(D^{-1/2}jj^TD^{-1/2}-I)D^{1/2}\label{d-j}\\
&=& D^{1/2}(pp^T-I)D^{1/2}\nonumber
\end{eqnarray}
 Thus $J-D=A(K_n, x)$ is nonsingular if and only if $pp^T-I$ is nonsingular. Since the eigenvalues of $pp^T$ are $p^Tp$ and $0$, the eigenvalues of the matrix $pp^T-I$ are
 $pp^T-1$ and $-1$.  Thus $pp^T-I$, and $J-D$, are nonsingular if and only if $p^Tp\ne 1$.  As $p_i=({1-x_i})^{-1/2}$,
\[p^Tp=\sum_{i=1}^n\frac{1}{1-x_i}=t(x),\]
and  $(K_n, x)$ is nonsingular if and only if $t(x)\neq 1$.

\item If $t(x)=1$, then
if $y_{n+1}\neq 1$,   $$t(y)=t(x)+\frac{1}{1-y_{n+1}}\neq 1,$$  and $(K_{n+1},y)$ is nonsingular by part 3(a);
and if $y_{n+1}= 1$,  $(K_{n+1},y)$ is nonsingular by part 1 of the  theorem.

\end{enumerate}
\item \begin{enumerate}
\item Let $x_i\neq 1$, $i=1,\hdots n$. For every $p\in \mathbb{R}^n$,
 $$(I-pp^T)(I+spp^T)=I+(s-1-sp^Tp)pp^T.$$ Therefore if $p^Tp\neq 1$,
 $$(I-pp^T)^{-1}=I+\frac{1}{1-p^Tp}pp^T.$$ Thus if $A(K_n,x)=J-D$ is invertible, where
$D=\diag(j-x)$, then by  \eqref{d-j}  above, $$(A(K_n,x))^{-1}=(J-D)^{-1}=-D^{-1/2}\Big(I+\frac{1}{1-p^Tp}pp^T\Big)D^{-1/2}.$$
Hence
 \begin{align*} \label{vog}
j^T(A(K_n,x))^{-1}j &=-j^TD^{-1/2}\Big(I+\frac{1}{1-p^Tp}pp^T\Big)D^{-1/2}j\nonumber\\ &= -p^T\Big(I+\frac{1}{1-p^Tp}pp^T\Big)p=\frac{p^Tp}{p^Tp-1}.
\end{align*}

 Let
\begin{equation*}\label{fjds}
P=\begin{bmatrix}
I & -A(K_n, x)^{-1}j & O\\ o^T & 1 & o^T\\ O^T & o & I
\end{bmatrix}.
\end{equation*}
Then
\begin{equation*}\label{fjde}
P^TMP=\begin{bmatrix}
A(K_n, x) & o &O^T\\ o^T & \alpha+\gamma (K_n, x)&w^T\\O&w&B
\end{bmatrix},
\end{equation*}
where  \begin{equation*} 
\gamma(K_n, x)=-j^TA(K_n, x)^{-1}j=-\frac{p^Tp}{p^Tp-1}=-\frac{t(x)}{t(x)-1}.
\end{equation*}

\item When exactly one of  $x_1,\hdots, x_n$ is equal to 1, we may assume without loss of generality that $x_1=1$. If in
\begin{equation*}
M=\begin{bmatrix}
A(K_n, x) & j &O^T\\ j^T & \alpha& w^T\\O&w&B
\end{bmatrix}
\end{equation*}
we subtract the first column from column $n+1$, and then  the first row from  row $n+1$, we get the following matrix:
\begin{equation*}
\begin{bmatrix}
A(K_n, x) & o &O^T\\ o^T & \alpha+\gamma(K_n, x)&w^T\\O&w&B
\end{bmatrix},
\end{equation*} where $\gamma(K_n, x)=-1.$ \pushQED{\qed} \qedhere
\end{enumerate}
\end{enumerate}
\end{proof}

\begin{remark}\label{remarkgamma}{\rm
In part 4 of Theorem \ref{l1}, note the following special cases for $n\ge 2$:
\begin{enumerate}
\item If $x_i<1$ for every $1\le i\le n$, and $x_i=0$ for at least one $1\le i\le n$, then $t(x) >1.$ Hence in this case $\gamma(K_n, x) < -1$.

\item If $x$ is a zero vector,  $A(K_n, o)^{-1}=-I+\frac{1}{n-1}J$ is a matrix with all diagonal elements equal to $-\frac{n-2}{n-1}$ and all off diagonal elements equal to
    $\frac{1}{n-1}.$ In this case we get that
\begin{equation*}
\gamma(K_n, o)=-\frac{n}{n-1}.
\end{equation*}
\end{enumerate}}
\end{remark}

\begin{remark}\label{rm1.12}
{\rm Part 2 of Theorem \ref{l1} may be generalized: If a vertex-weighted block  graph $(G,x)$ has a block $(B,x^B)$
such that $x_i=x_j=1$ for two non-cut-vertices $i\ne j$, then $(G,x)$ is singular.}
\end{remark}

For a block $(B, x^B)$  of  a vertex-weighted block  graph $(G, x)$, we denote by  $\bar{x}^B$  the sub-vector of $x^B$ consisting of the entries corresponding to the
non-cut-vertices in $(B, x^B)$.
If $x_i\ne 1$ for every non-cut-vertex in $B$,
we define \begin{equation}\label{tau}
\tau_{(G,x)}(B,x^B)=t(\bar{x}^B).
\end{equation}
We simplify the notation to $\tau(B)$ when no confusion may arise.

We now define two operations on $(G, x)$ using its pendant blocks.

\begin{definition}\label{PB-reds}
{\rm
\begin{enumerate}
\item \textbf{PB-deletion.} Let $(B, x^B)$ be a pendant block such that $\bar{x}^B_i\ne 1$ for every $i$, and $\tau(B)=1$.
A \emph{PB-deletion    of $(B, x^B)$}  is the operation of deleting all the vertices of $B$ and the corresponding entries of the
weights vector $x$, yielding a subgraph $(H, x^H)$, where $H=G \setminus B$.

\item \textbf{PB-contraction.} Let $(B, x^B)$ be a pendant block  of $(G,x)$ with a cut-vertex $k$, such that
either exactly one entry in $\bar{x}^B$ is 1, or $\bar{x}^B_i\ne 1$ for every $i$  and $\tau(B)\ne 1$.
A \emph{PB-contraction of $(B,x^B)$} is the operation of merging all
the vertices of $(B, x^B)$ to the cut vertex $k$, deleting the entries of $\bar{x}^B$ from $x$, and adding the weight $\gamma(B,x^B)$ to $x_k$,  where
$$\gamma(B,x^B)=\begin{cases}
-1 & \mbox{if exactly one entry in $\bar{x}^B$ is 1},\\
-\frac{\tau(B)}{\tau(B)-1} & \mbox{if no entry in $\bar{x}^B$ is 1}.
\end{cases}$$
\end{enumerate}}
\end{definition}

Note that when $(G,x)$ is a vertex-weighted block graph, both PB-deletion and PB-contraction generate a vertex-weighted block graph. Also,
PB-deletions may disconnect a connected vertex-weighted block graph, but PB-contractions preserve connectivity.

\begin{lemma}\label{PBdel} Let $(B, x^B)$ be a pendant block of $(G, x)$  such that $x_i\ne 1$ for every non-cut-vertex in $B$,
and $\tau(B)=1.$
Let $(H, x^H)$ be obtained from $(G, x)$ by PB-deletion of $(B, x^B)$. Then $(G, x)$ is singular if and only if $(H, x^H)$ is singular.
\end{lemma}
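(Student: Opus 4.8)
The plan is to exhibit an explicit invertible block transformation of $A(G,x)$ that isolates the pendant block $(B,x^B)$ and shows its contribution to the rank is exactly $|V(B)|$, leaving behind a matrix whose remaining block is (a matrix congruent to) $A(H,x^H)$. First I would set up coordinates: label the vertices of $G$ so that the non-cut-vertices of $B$ come first, then the cut-vertex $k$ of $B$, then the rest of $V(G)$. With $B$ a clique on $m+1$ vertices ($m$ non-cut-vertices plus $k$), the matrix $A(G,x)$ has the shape
\[
A(G,x)=\begin{bmatrix} A(K_m,\bar x^B) & j & O^T\\ j^T & \alpha & w^T\\ O & w & B'\end{bmatrix},
\]
where $\alpha=x_k$, the row $w^T$ records the edges from $k$ into $G\setminus B$, and $B'=A(G\setminus B, x^{G\setminus B})+\text{(correction to the $k,k$ entry already absorbed in }\alpha)$ — more precisely $B'$ together with the scalar $\alpha$ and vector $w$ assembles $A(G\setminus B, x^{G\setminus B})$ bordered appropriately. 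This is exactly the matrix $M$ appearing in part 4 of Theorem~\ref{l1} with $n=m$ and the roles of $A(K_n,x)$ played by $A(K_m,\bar x^B)$.

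The key step is to invoke the hypothesis $\tau(B)=t(\bar x^B)=1$, which by part 3(a) of Theorem~\ref{l1} says precisely that $A(K_m,\bar x^B)$ is \emph{singular}. So I cannot directly apply part 4 (which assumed nonsingularity); instead I would redo the reduction of part 3/4 in the singular case. Using the factorization $A(K_m,\bar x^B)=J-D = D^{1/2}(pp^T-I)D^{1/2}$ from the proof of Theorem~\ref{l1} (valid since $\bar x^B_i\neq 1$ for all $i$, so $D=\diag(j-\bar x^B)$ is invertible), and $p^Tp=t(\bar x^B)=1$, one sees $pp^T-I$ has rank $m-1$ with one-dimensional kernel spanned by $p$, equivalently $A(K_m,\bar x^B)$ has kernel spanned by $D^{-1/2}p = D^{-1}j$. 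Crucially, $j$ is \emph{not} in the column space of $A(K_m,\bar x^B)$: indeed $j = D^{1/2}p$, and $p \notin \text{range}(pp^T - I)$ since $p$ is a kernel vector of the symmetric matrix $pp^T-I$ and $p^Tp = 1 \neq 0$. Therefore the bordered matrix $\begin{bmatrix} A(K_m,\bar x^B) & j\\ j^T & \alpha'\end{bmatrix}$ has rank $m+1$ for \emph{any} scalar $\alpha'$: the new column $\binom{j}{\alpha'}$ is independent of the first $m$ columns (because $j\notin \text{range}(A(K_m,\bar x^B))$), and the new row is independent by symmetry.

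From here I would finish by a rank/congruence argument. Conjugating $A(G,x)$ by a block matrix that clears the $w$ and $w^T$ off-diagonal entries against the invertible $(m{+}1)\times(m{+}1)$ top-left corner of the reordered matrix — or, cleanly, by observing that the top-left $(m{+}1)\times(m{+}1)$ block $\left[\begin{smallmatrix} A(K_m,\bar x^B) & j\\ j^T & \alpha\end{smallmatrix}\right]$ is invertible, so Schur-complement on it — reduces $A(G,x)$ to a block-diagonal form $\mathrm{diag}\!\left(\text{(invertible $(m{+}1)\times(m{+}1)$ block)},\ S\right)$ where $S = B' - w\,(\cdot)\,w^T$ is the Schur complement. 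The remaining point is a short computation showing $S = A(H,x^H)$: since $k$ has \emph{no} neighbors outside $B$ other than through... actually $k$ is the cut-vertex, so $w$ records $k$'s edges to $H\setminus\{k\}$ and the Schur complement modifies only the $(k,k)$-entry of $A(H)$, adding $-\,[\text{inverse of the bordered block}]_{k,k}$; but because $j\notin\mathrm{range}(A(K_m,\bar x^B))$ this correction term is $0$ — equivalently, solving the bordered system leaves $x_k$ untouched — so $S=A(H,x^H)$ exactly, with $x^H = x^{G\setminus B}$ (no weight added to $k$, matching the definition of PB-deletion). Hence $\mathrm{rank}\,A(G,x) = (m+1) + \mathrm{rank}\,A(H,x^H)$, and $A(G,x)$ is nonsingular iff $A(H,x^H)$ is, i.e. $(G,x)$ is singular iff $(H,x^H)$ is.

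The main obstacle I anticipate is the bookkeeping in the last step: one must be careful that the Schur-complement correction to the $(k,k)$ entry genuinely vanishes (this is where $t(\bar x^B)=1$, equivalently $j\perp\ker$ failing / $j\notin\mathrm{range}$, is used a second time), and that $H=G\setminus B$ with the \emph{unmodified} weight at $k$ is what emerges — unlike the PB-contraction case where a weight $\gamma(B)$ is added. A clean way to package this is: compute the inverse of $\left[\begin{smallmatrix} A(K_m,\bar x^B) & j\\ j^T & \alpha\end{smallmatrix}\right]$ explicitly using $A(K_m,\bar x^B)=J-D$ and the Sherman–Morrison-type identities already in the proof of Theorem~\ref{l1}, read off its bottom-right entry, and verify it is $0$; then the Schur complement is literally $B'$ bordered as $A(H,x^H)$. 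I would also remark that if $H$ is disconnected, the statement is applied componentwise, since $\det$ is multiplicative over components.
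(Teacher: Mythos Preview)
Your Schur-complement approach is valid and reaches the conclusion, but it differs from the paper's argument. The paper performs a Laplace expansion of $\det A(G,x)$ along the first $m{+}1$ rows and argues that among all $(m{+}1)$-minors on those rows only the principal one (columns $1,\dots,m{+}1$) survives: any other choice of columns either contains a zero column, or contains two columns of the form $e_{m+1}$, or consists of the first $m$ columns together with an $e_{m+1}$ column (giving a minor with $\det A_1=0$), or else has a complementary minor with a zero row. This yields $\det A(G,x)=\det A(B,x^B)\cdot\det A(H,x^H)$ in one stroke, and Theorem~\ref{l1}(3)(b) gives $\det A(B,x^B)\ne 0$. Your route instead block-eliminates against the invertible $(m{+}1)\times(m{+}1)$ block $A(B,x^B)$ and identifies the Schur complement; this is a bit more machinery but gives the same determinant factorization and has the advantage of producing a congruence (hence a rank statement), not just a determinant identity.

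There is one bookkeeping confusion to correct. In PB-deletion the cut-vertex $k$ is deleted along with the rest of $B$, so $k\notin V(H)$ and there is no ``$(k,k)$-entry of $A(H)$'' to speak of. The Schur complement of $A(B,x^B)$ in $A(G,x)$ is $S=B'-\beta\,ww^T$, where $\beta=[A(B,x^B)^{-1}]_{m+1,m+1}$; the correction is a full rank-one matrix, not a single diagonal adjustment. Your conclusion that the correction vanishes is nonetheless right, and for the right reason: by Cramer's rule $\beta=\det(A_1)/\det A(B,x^B)=0$ since $A_1=A(K_m,\bar x^B)$ is singular under the hypothesis $\tau(B)=1$ (your alternative justification via $j\notin\mathrm{range}(A_1)$ forcing $\beta=0$ through the block-inverse equations $A_1q+\beta j=0$ is also correct). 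Hence $S=B'=A(H,x^H)$ on the nose, and the rank identity $\mathrm{rank}\,A(G,x)=(m{+}1)+\mathrm{rank}\,A(H,x^H)$ finishes the proof as you indicate.
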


\begin{proof}  Without loss of generality we may assume that the vertices of $B$ are $\{1, \dots, k\}$, and $k$ is the cut-vertex.
Then \[A(G,x)=\begin{bmatrix}
A_1 & j &O^T\\ j^T & x_k&w^T\\O&w&A_2
\end{bmatrix},\]
where $A_1=A(K_{k-1}, \bar{x}^B)$ and $A_2=A(H,x^H)$.
Any nonzero minor on the first $k$ rows and some $k$ columns, cannot have a zero column,
cannot have more than one column of the form $e_k$, and cannot consist of the first $k-1$ columns and a column of the form $e_k$,
since $A_1$ is singular. Thus every such nonzero minor includes
the $k$-th column, and any  nonzero  minor that does not include all the first $k-1$ columns has a zero complementary minor.
Hence the Laplace expansion of $\det A(G,x)$ along the first $k$ rows yields
$$\det A(G, x)=\det\begin{bmatrix}
A_1 & j\\ j^T & x_k
\end{bmatrix}\det A_2 =\det A(B, x^B) \det A(H, x^H).$$
(see also {\cite[Lemma 2.3]{singh2017characteristic}}.)

By   part 3(b) of Theorem \ref{l1}, $A(B, x^B)$ is nonsingular. Thus $(G,x)$ is nonsingular if and only if $(H,x^H)$ is nonsingular.
\end{proof}

\begin{lemma}\label{PBcont}    Let $(B, x^B)$ be a pendant block of $(G, x)$ such that either
$\bar{x}^B_i\ne 1$ for  $i$ of $B$ and $\tau(B)\ne 1$,
or exactly one entry in $\bar{x}^B$ is 1.
Let $(H, y)$ be obtained from $(G, x)$ by a PB-contraction of $(B, x^B)$. Then $(G, x)$ is singular if and only if $(H, y)$ is singular.
\end{lemma}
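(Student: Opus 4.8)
The plan is to follow the same template as the proof of Lemma~\ref{PBdel}, but to replace the Laplace-expansion step by an appeal to part~4 of Theorem~\ref{l1}. First I would relabel the vertices so that $V(B)=\{1,\dots,k\}$ with $k$ the cut-vertex, reordering $x$ accordingly; this is permissible as noted before Theorem~\ref{l1}. Setting $H'=G\setminus B$ and letting $w$ be the $(0,1)$-vector that records which vertices of $H'$ are adjacent to $k$ in $G$, the matrix $A(G,x)$ then has exactly the block shape
\[
A(G,x)=\begin{bmatrix} A(K_{k-1},\bar{x}^B) & j & O^T\\ j^T & x_k & w^T\\ O & w & A(H',x^{H'})\end{bmatrix},
\]
which is a matrix $M$ of the form considered in part~4 of Theorem~\ref{l1}, with $n=k-1$, $\alpha=x_k$, and the bottom-right block equal to $A(H',x^{H'})$. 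The top-right zero block is forced because the non-cut-vertices of the pendant block $B$ have no neighbours outside $B$.

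The hypotheses guarantee that $(K_{k-1},\bar{x}^B)$ is nonsingular: if exactly one entry of $\bar{x}^B$ equals $1$ this is part~1 of Theorem~\ref{l1}, and if no entry of $\bar{x}^B$ equals $1$ then $\tau(B)=t(\bar{x}^B)\ne1$, so this is part~3(a). Hence part~4 of Theorem~\ref{l1} applies, and tells me that $A(G,x)$ can be brought, by elementary row and column operations, to
\[
N=\begin{bmatrix} A(K_{k-1},\bar{x}^B) & o & O^T\\ o^T & x_k+\gamma(K_{k-1},\bar{x}^B) & w^T\\ O & w & A(H',x^{H'})\end{bmatrix}.
\]
Matching the two cases of \eqref{valueofgamma} against the two cases in Definition~\ref{PB-reds} (using $\tau(B)=t(\bar{x}^B)$) gives $\gamma(K_{k-1},\bar{x}^B)=\gamma(B,x^B)$, so the $(k,k)$-entry of the trailing block of $N$ is precisely $y_k=x_k+\gamma(B,x^B)$.

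Since elementary row and column operations preserve rank, $(G,x)$ is nonsingular iff $N$ is nonsingular. Because the leading $(k-1)\times(k-1)$ block $A(K_{k-1},\bar{x}^B)$ of $N$ is nonsingular and $N$ is block-diagonal for the split $\{1,\dots,k-1\}$ versus $\{k,k+1,\dots,n\}$, $N$ is nonsingular iff the trailing block
\[
\begin{bmatrix} x_k+\gamma(B,x^B) & w^T\\ w & A(H',x^{H'})\end{bmatrix}
\]
is nonsingular. This trailing block is exactly $A(H,y)$: PB-contraction merges $B$ into the cut-vertex $k$, so $H=G\setminus(B\setminus\{k\})$ has vertex set $\{k\}\cup V(H')$, the weight of $k$ becomes $y_k=x_k+\gamma(B,x^B)$, its neighbourhood in $H$ is recorded by $w$, and the induced subgraph $H\setminus k=H'$ together with its weights is unchanged. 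Therefore $(G,x)$ is singular iff $(H,y)$ is singular. I do not expect a real obstacle here; the only thing needing care is the bookkeeping that identifies the trailing block of $N$ with $A(H,y)$ and the verification that the two definitions of $\gamma$ coincide.
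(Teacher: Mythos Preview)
Your proof is correct and follows essentially the same route as the paper: relabel so $V(B)=\{1,\dots,k\}$ with $k$ the cut-vertex, write $A(G,x)$ in the block form of part~4 of Theorem~\ref{l1}, invoke parts~1 and~3(a) to get nonsingularity of $A(K_{k-1},\bar x^B)$, apply part~4 to reduce, and read off $A(H,y)$ as the trailing block. Your write-up is in fact slightly more careful than the paper's in spelling out why the trailing block equals $A(H,y)$ and why the two cases of $\gamma$ match.
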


\begin{proof}
Without loss of generality we may assume that the vertices of $B$ are $\{1, \dots, k\}$, and $k$ is the cut-vertex.
Then \[A(G,x)=\begin{bmatrix}
A_1 & j &O^T\\ j^T & x_k&w^T\\O&w&A_2
\end{bmatrix},\]
where $A_1=A(K_{k-1}, \bar{x}^B)$.
If either $x_i\ne 1$ for every non-cut-vertex $i$ of $B$ and $\tau(B)\ne 1$,
or exactly one entry in $\bar{x}^B$ is 1,
the matrix $A_1$ is nonsingular by  part 3(a) and part 1 of  Theorem \ref{l1}.  By part 4 of that theorem, $A(G,x)$ is similar to
the matrix
\[\begin{bmatrix}
A_1 & o &O^T\\ o^T & x_k+\gamma&w^T\\O&w&A_2
\end{bmatrix},\]
where
\[\gamma =\begin{cases}
-1 & \mbox{if exactly one entry in $\bar{x}^B$ is 1},\\
-\frac{\tau(B)}{\tau(B)-1} & \mbox{if no entry in $\bar{x}^B$ is 1}.
\end{cases}\]
Hence $(G,x)$ is nonsingular if and only if
\[A(H,y)=\begin{bmatrix}
 x_k+\gamma&w^T\\w&A_2
\end{bmatrix} \]
is nonsingular.
\end{proof}

\begin{remark}\label{genPB}{\rm
Note that PB-deletion and PB-contraction may be used for any vertex-weighted graph $(G,x)$ which
has a pendant block $(B,x^B)$, where $B$ is a clique, and the proper conditions on $x^B$
are satisfied. Lemmas \ref{PBdel} and \ref{PBcont} hold in this case too.}
\end{remark}

\begin{definition}\textbf{Reduced vertex-weighted block graph. }{\rm A vertex-weighted block  graph $(H,y)$ is a \emph{reduced vertex-weighted block  graph} of the vertex-weighted block  $(G,x)$
if it is obtained from $(G,x)$ by a finite number of PB-deletions and PB-contractions.}
\end{definition}

Lemmas \ref{PBdel} and \ref{PBcont} imply that if $(H,y)$ is a reduced vertex-weighted block  graph of $(G,x)$, then $(G,x)$ is nonsingular if and only if
$(H,y)$ is nonsingular. We can now prove the main theorem.

\begin{theorem}\label{thrm1}
A vertex-weighted block  graph $(G,x)$ is singular if and only if there exists a reduced vertex-weighted block  graph $(H,y)$
that has one of the following:
\begin{enumerate}
\item A component $(B,y^B)$, where $B$ is a clique and $y_i\ne 1$ for every vertex $i$ and $\tau(B)=1$
\item A block $(B,y^B)$ for which at least two entries of $\bar{y}^B$ are equal to 1.
\end{enumerate}
\end{theorem}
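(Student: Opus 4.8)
The plan is to run everything through the two reduction lemmas and thereby reduce the statement to the facts already established for a single vertex-weighted clique, namely Theorem~\ref{l1}, together with Remark~\ref{rm1.12}. The first thing I would record is the consequence of Lemmas~\ref{PBdel} and~\ref{PBcont} noted just before the theorem: PB-deletion and PB-contraction preserve singularity, so $(G,x)$ is singular if and only if some (equivalently, every) reduced vertex-weighted block graph of it is singular. This splits the proof into two essentially separate tasks.

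For the ``if'' direction I would argue directly that a vertex-weighted block graph carrying a sub-structure as in item~1 or item~2 is singular, and then transfer this back to $(G,x)$ via the observation above. If $(H,y)$ has a component $(B,y^B)$ with $B$ a clique, $y_i\ne 1$ for all $i$, and $\tau(B)=1$, then a clique has no cut-vertex, so $\bar y^B=y^B$ and $\tau(B)=t(y^B)$, whence $(B,y^B)$ is singular by part~3(a) of Theorem~\ref{l1}; since $A(H,y)$ is a block-diagonal matrix having $A(B,y^B)$ as one of its diagonal blocks, $(H,y)$ is singular. If instead $(H,y)$ has a block $(B,y^B)$ with at least two entries of $\bar y^B$ equal to $1$, then $(H,y)$ is singular by Remark~\ref{rm1.12}.

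For the ``only if'' direction, assuming $(G,x)$ singular, I would perform a sequence of PB-deletions and PB-contractions that is as long as possible. Each PB-deletion removes all (at least two) vertices of a pendant block and each PB-contraction removes all (at least one) of its non-cut-vertices, so the number of vertices strictly decreases and the process terminates in a reduced graph $(H,y)$ to which no further PB-deletion or PB-contraction applies; by the first observation $(H,y)$ is singular. I would then analyze $(H,y)$ one component at a time. A component with a single block is a vertex-weighted clique $(K_m,y')$, and since $K_m$ has no cut-vertex, $\bar y'=y'$; being singular, it can neither have exactly one weight equal to $1$ (part~1 of Theorem~\ref{l1}) nor have all weights $\ne 1$ with $t(y')\ne 1$ (part~3(a)), so either at least two of its weights equal $1$ (item~2) or no weight equals $1$ and $t(y')=\tau(K_m)=1$ (item~1). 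A component with at least two blocks has, by the block-cut-tree structure, a pendant block with exactly one cut-vertex; inspecting Definition~\ref{PB-reds} shows that such a pendant block $(B,y^B)$ admits neither a PB-deletion nor a PB-contraction only if at least two entries of $\bar y^B$ equal $1$ --- indeed, if none does then either $\tau(B)=1$ (so PB-deletion applies) or $\tau(B)\ne 1$ (so PB-contraction applies), and if exactly one does, PB-contraction applies. Hence in the terminal $(H,y)$ every such component already contains a block as in item~2. Since $(H,y)$ is singular and nonempty, at least one of its components falls into one of these cases, so $(H,y)$ is a reduced graph of $(G,x)$ exhibiting item~1 or item~2.

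I expect the main obstacle to be organizational rather than computational: one must keep in mind that a PB-deletion can disconnect the graph, so the terminal-graph analysis has to be carried out component by component rather than assuming connectivity, and one must check carefully that the case distinction on a terminal pendant block is genuinely exhaustive, i.e.\ that failure to admit either a PB-deletion or a PB-contraction really does force the ``two weights equal to $1$'' configuration of item~2. Beyond that, the argument is a direct appeal to Theorem~\ref{l1}, Remark~\ref{rm1.12}, and Lemmas~\ref{PBdel}--\ref{PBcont}.
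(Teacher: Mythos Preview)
Your proposal is correct and follows essentially the same approach as the paper: both directions rest on Lemmas~\ref{PBdel}--\ref{PBcont}, Theorem~\ref{l1}, and Remark~\ref{rm1.12}, and the key step is to reduce until no further PB-operation is possible and then observe that a terminal pendant block not admitting either operation must have two non-cut weights equal to~$1$. The only cosmetic difference is that the paper proves the ``only if'' direction by contrapositive (assuming no reduced graph satisfies 1 or 2 and concluding nonsingularity), whereas you argue directly from singularity of $(G,x)$; the underlying case analysis is identical.
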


\begin{proof} If  $(H,y)$ is a reduced vertex-weighted block  graph of $(G,x)$, and $(H,y)$ satisfies 1 or 2, then $(H,y)$ is singular
by part 3(a) of Theorem \ref{l1} or by Remark \ref{rm1.12}, respectively. By Lemmas \ref{PBdel} and \ref{PBcont} this implies that $(G, x)$ is singular.

Now suppose no reduced vertex-weighted block  graph of $(G,x)$ satisfies 1 or 2. Perform PB-deletions
and PB-contractions on $(G,x)$ until a reduced graph $(H,y)$ is obtained, for which no further PB-deletion
or PB-contraction is possible. As $(H,y)$ cannot be further reduced, and does not satisfy 2, it does not have any
pendant blocks. That is, each of its components is of the form $(B,y^B)$, where $B$ is a clique. Since 1 and 2 are
not satisfied, either $y_i=1$ for exactly one vertex $i$ of $B$, or $y_i\ne 1$ for every vertex $i$ of $B$ and $\tau(B)\ne 1$. Hence by Theorem \ref{l1},
each component of $(H,y)$ is nonsingular, and so is $(G, x)$.
 \end{proof}

We conclude the section with two of examples of families of  vertex-weighted block graphs, where nonsingularity may be
easily checked (without actually reducing the vertex-weighted block graph).

\begin{theorem}\label{tauthr}
Let $(G,x)$ be a  vertex-weighted block graph  that satisfies the following two properties:
\begin{enumerate}
\item[(a)] $x_i\ne 1$ for every vertex $i$.
\item[(b)] $x_i<1$ for every  cut-vertex  $i$.
\item[(c)] For every block $(B,x^B)$ of $(G,x)$, $\tau(B)>1$.
\end{enumerate}
Then $(G,x)$ is nonsingular.
\end{theorem}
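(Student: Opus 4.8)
The plan is to deduce Theorem~\ref{tauthr} from Theorem~\ref{thrm1} by showing that properties (a)--(c) are inherited by every reduced vertex-weighted block graph of $(G,x)$. The first observation is structural: since (c) forces $\tau(B)>1$ for every block, and in particular for every \emph{pendant} block, a PB-deletion can never be performed (it requires $\tau(B)=1$), and by (a) no entry of any $\bar x^B$ equals $1$. Hence the only admissible reduction step is a PB-contraction of a pendant block $(B,x^B)$, for which $\gamma(B,x^B)=-\tau(B)/(\tau(B)-1)$.

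The crux is a single inequality. Writing $\gamma(B,x^B)=-1-\tfrac{1}{\tau(B)-1}$ and using $\tau(B)>1$ gives $\gamma(B,x^B)<-1$; combining this with $x_k<1$ for the cut-vertex $k$ of $B$ (property (b)), the new weight $y_k=x_k+\gamma(B,x^B)$ satisfies $y_k<1+(-1)=0<1$. I would then verify that a PB-contraction preserves (a)--(c): every weight other than the one at $k$ is unchanged, and $y_k<0\ne 1$, which gives (a); the cut-vertices of the contracted graph $(H,y)$ form a subset of the cut-vertices of $(G,x)$ (only $k$ can lose cut-vertex status, namely when it lay in exactly two blocks of $G$), carrying the same weights, with $y_k<1$ in case $k$ remains a cut-vertex, which gives (b); and for (c), the blocks of $H$ are exactly the blocks of $G$ other than $B$, with unchanged sets of non-cut-vertices --- the one exception being that if $k$ stops being a cut-vertex then $k$ joins the non-cut-vertices of the unique block $B'$ of $G$ (other than $B$) that contained $k$, and then $\tau_{(H,y)}(B',y^{B'})=\tau_{(G,x)}(B',x^{B'})+\tfrac{1}{1-y_k}>\tau_{(G,x)}(B',x^{B'})>1$ because $1-y_k>1>0$. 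Thus $(H,y)$ again satisfies (a)--(c), and by induction so does every reduced graph of $(G,x)$.

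To conclude, I would apply Theorem~\ref{thrm1}: a graph satisfying (a) cannot meet its condition~2 (two weights equal to $1$), and a graph satisfying (a) and (c) cannot meet its condition~1 (a clique component with $\tau=1$). Since no reduced graph of $(G,x)$ meets condition~1 or~2, $(G,x)$ is nonsingular. Equivalently, one may argue by induction on the number of blocks, PB-contracting pendant blocks one at a time --- each step preserving (a)--(c) and, by Lemma~\ref{PBcont}, preserving (non)singularity --- until every component is a single clique $B$ with $\tau(B)>1\ne 1$, hence nonsingular by Theorem~\ref{l1}, part~3(a).

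I expect the main obstacle to be the bookkeeping in the preservation step: tracking how the set of cut-vertices, and the non-cut-vertex sets of the individual blocks, change when a pendant block is contracted. The genuinely delicate case is when the cut-vertex $k$ of the contracted block becomes a non-cut-vertex of $H$, since only then does the new weight $y_k$ actually enter a $\tau$-computation; but the inequality $y_k<0$ makes $\tfrac{1}{1-y_k}$ positive and so keeps the relevant $\tau$ strictly above $1$.
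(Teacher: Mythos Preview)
Your proposal is correct and follows essentially the same approach as the paper: show that a PB-contraction of a pendant block preserves properties (a)--(c), so that successive contractions reduce $(G,x)$ to a disjoint union of cliques each with $\tau>1$, hence nonsingular by Theorem~\ref{l1}. Your bookkeeping is in fact slightly more thorough than the paper's --- you obtain the sharper bound $y_k<0$ (the paper only records $y_k<1$), and you spell out explicitly that the cut-vertices of $H$ are a subset of those of $G$ and that $k$ may lose cut-vertex status.
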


\begin{proof}
We show that such $(G,x)$ may be reduced by PB-contractions to a vertex-weighted clique satisfying (a) and (b).
Since such a reduced graph is  nonsingular by Theorem \ref{l1}, this will complete the proof.

It suffices to show that if
 $(B,x^B)$ is a pendant block of $(G,x)$ satisfying (a)--(c), then $(B,x^B)$ may be PB-contracted and the resulting
 vertex-weighted block graph will also satisfy (a)--(c).

Let $k$ be the cut vertex of a pendant block $B$ of $G$.
By (a)--(c), this pendant block may be PB-contracted. The resulting vertex-weighted block graph
$(H,y)$ satisfies $y_i=x_i\ne 1$ for every vertex $i$ of $H$ other than $k$, and
$y_k=x_k-\frac{\tau(B)}{\tau(B)-1}$. As $\tau(B)>1$, $y_k=x_k-\frac{\tau(B)}{\tau(B)-1}<x_k<1$.
Also, for every block $(C,y^C)$ of $(H,y)$, if $k$ is not a vertex in $C$, or $k$ is a cut-vertex in $C$, then  clearly $\tau_{(H,y)}(C,y^C)=\tau_{(G,x)}(C,x^C)>1$.
If $k$ is a non-cut-vertex of $C$ in $(H,y)$, $\tau_{(H,y)}(C, y^C)=\tau_{(G,x)}(C,y^C)+\frac{1}{1-y_k}>1$, since $y_k<1$.
\end{proof}

\begin{theorem}\label{vwb31}
Let $(G,x)$ be a  vertex-weighted block graph, that satisfies the following three properties:
\begin{enumerate}
\item[(a)] $x_i<1$ for every  vertex  $i$.
\item[(b)] Each block $B$ of $G$ has at least $3$ vertices.
\item[(c)] For every block $(B,x^B)$ of $(G,x)$, there exists $i$ such that $\bar{x}^B_i=0$.
\end{enumerate}
Then $(G,x)$ is nonsingular.
\end{theorem}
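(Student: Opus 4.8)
The plan is to prove the theorem by induction on the number of blocks of $G$, with PB-contraction as the single reduction step. Since the adjacency matrix of a disconnected vertex-weighted block graph is block-diagonal, $(G,x)$ is nonsingular if and only if each of its components is; moreover each component again satisfies (a)--(c), so I may assume at the outset that $G$ is connected. For the base case $G$ is a single block, $G=K_m$ with $m\ge 3$. Then every vertex of $G$ is a non-cut-vertex, so $\bar x^B=x$; by (c) some entry is $0$, by (a) all entries are $<1$, and since $m\ge 2$ the computation in Remark \ref{remarkgamma}(1) gives $\tau(B)=t(x)>1$, in particular $t(x)\ne 1$. Hence $(K_m,x)$ is nonsingular by Theorem \ref{l1}(3a).

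For the inductive step assume $G$ is connected with at least two blocks. Then $G$ has a pendant block $(B,x^B)$, with its unique cut-vertex $k$. Because $|V(B)|\ge 3$ and only one vertex of $B$ is a cut-vertex of $G$, the vector $\bar x^B$ has at least two entries; all of them are $<1$ by (a) and at least one is $0$ by (c), so exactly as in the base case $\tau(B)>1$. In particular no entry of $\bar x^B$ equals $1$ and $\tau(B)\ne 1$, so $(B,x^B)$ may be PB-contracted; let $(H,y)$ be the resulting vertex-weighted block graph, where $y_i=x_i$ for $i\ne k$ and $y_k=x_k-\frac{\tau(B)}{\tau(B)-1}$. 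I then want to verify that $(H,y)$ still satisfies (a)--(c). The blocks of $H$ are precisely the blocks of $G$ other than $B$, which gives (b); and $\frac{\tau(B)}{\tau(B)-1}>0$ since $\tau(B)>1$, so $y_k<x_k<1$, giving (a).

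The one delicate point, and the place where the argument has to be carried out carefully, is condition (c) for $(H,y)$. Let $C$ be a block of $H$, i.e.\ a block of $G$ different from $B$, and let $j$ be a non-cut-vertex of $C$ in $(G,x)$ with $x_j=0$, provided by (c). Since $k$ is a cut-vertex of $G$, necessarily $j\ne k$, hence $j\notin V(B)$, and $j$ belongs to no block of $G$ other than $C$; consequently $j$ belongs to no block of $H$ other than $C$ either, so $j$ is still a non-cut-vertex of $C$ in $(H,y)$, and $y_j=x_j=0$. (Passing from $G$ to $H$ may turn $k$ into a non-cut-vertex of a block $C$ containing it, but that only adds $k$ to the set of non-cut-vertices of $C$ and leaves the witness $j$ untouched.) Thus $(H,y)$ satisfies (a)--(c) and has one block fewer than $(G,x)$; by the induction hypothesis it is nonsingular, and therefore $(G,x)$ is nonsingular by Lemma \ref{PBcont}.

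It is worth noting why one cannot simply invoke Theorem \ref{tauthr} for $(G,x)$ directly: a block $B$ of $G$ whose only non-cut-vertex has weight $0$ has $\tau(B)=1$, so hypothesis (c) of Theorem \ref{tauthr} need not hold for $(G,x)$ itself. Only after contracting away the pendant blocks do all surviving $\tau$-values become $>1$ (the fully reduced graph is a disjoint union of cliques, each still satisfying (a)--(c)), which is exactly what the inductive reduction exploits.
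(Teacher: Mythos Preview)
Your proof is correct and follows essentially the same approach as the paper: reduce by successive PB-contractions of pendant blocks, checking that hypotheses (a)--(c) are preserved at each step, until a single clique remains. The paper's own proof is terser (it simply says ``as in the previous theorem'' for the preservation of (a)--(c)), whereas you spell out carefully why the witness $j$ for condition (c) in each surviving block remains a non-cut-vertex with weight $0$ after contraction; this extra care is warranted and your argument for it is sound.
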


\begin{proof}
Note that if $(G,x)$ consists of a single block satisfying (a)--(d), then $(G,x)$ is nonsingular:
If $G=K_m$, where $m\ge 3$ and, without loss of generality, $x_1=0$,
\[\tau(G)\ge \sum_{i=1}^m\frac{1}{1-x_i}=1+\sum_{i=2}^m\frac{1}{1-x_i}>1\]
by (a), and thus $(G,x)$  is nonsingular by Theorem \ref{l1}.

If $(G,x)$ has a pendant block $(B,x^B)$, this block may be PB-contracted since
 $\tau(B)>1$ by the first part of Remark 1. As in the previous theorem, the resulting $(H,y)$ also satisfies (a)--(c).
Such $(G,x)$ may be reduced by successive PB-contractions to  a single vertex-weighted block satisfying
(a)--(c), and is therefore nonsingular.
\end{proof}

\begin{remark}\label{remtaub31}
{\rm The two families of vertex-weighted block graphs in Theorems \ref{tauthr} and \ref{vwb31}
are not mutually exclusive, but none of these families fully contains the other.

However, a block graph $G=(G,o)$ satisfies the conditions of  Theorem \ref{tauthr}
if and only if each block of $G$ has two non-cut-vertices. A  block graph $G$ satisfies the conditions of  Theorem \ref{vwb31}
if and only if each block of $G$ has at least three vertices, at least one of which is a non-cut-vertex.
That is, the family of block graphs satisfying Theorem \ref{vwb31} contains all the block graphs satisfying Theorem  \ref{tauthr}.

There are nonsingular  block graphs that do not satisfy the requirements in Theorem \ref{tauthr}. An
example of one such graph is given in Figure \ref{notau} (see Theorem \ref{nkimji}).
}\end{remark}

\section{Some classes of nonsingular  block graphs}
In this section we use Theorem \ref{thrm1} to identify some families of nonsingular block graphs.
First we name the graphs discussed at the end of the previous section.

\begin{definition}\label{b31bg}\textbf{$B^3_1$  block  graph. }{\rm
A block graph is a \emph{$B^3_1$  block  graph} if each block has at least three vertices, at least one of which is a non-cut-vertex.}
\end{definition}

From Theorem \ref{remtaub31} and Remark \ref{remtaub31} we deduce the following.

\begin{theorem}\label{B31bgNS}
Every $B^3_1$  block  graph is nonsingular
\end{theorem}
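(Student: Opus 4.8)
The plan is to derive Theorem~\ref{B31bgNS} directly from Theorem~\ref{vwb31}, applied to the zero weighting. First I would record the trivial but essential observation that for any graph $G$ one has $A(G,o)=A(G)+\diag(o)=A(G)$, so the vertex-weighted block graph $(G,o)$ is nonsingular if and only if $G$ is. It therefore suffices to prove that $(G,o)$ is nonsingular for every $B^3_1$ block graph $G$.

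Next I would verify that $(G,o)$ meets conditions (a)--(c) of Theorem~\ref{vwb31} with $x=o$. Condition (a), namely $x_i<1$ for every vertex $i$, is immediate since each $x_i=0$. Condition (b), that every block of $G$ has at least three vertices, is part of Definition~\ref{b31bg}. For condition (c), fix a block $(B,x^B)$ of $(G,o)$; by Definition~\ref{b31bg} the clique $B$ has at least one non-cut-vertex $i$, so $\bar{x}^B$ is a nonempty sub-vector of $x^B$ and $\bar{x}^B_i=0$. Hence all hypotheses of Theorem~\ref{vwb31} are satisfied, and that theorem yields that $(G,o)$, and therefore $G$, is nonsingular.

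The argument is essentially a translation of definitions, so there is no real obstacle. The one point worth flagging is that condition (c) of Theorem~\ref{vwb31} is only meaningful when $\bar{x}^B$ is nonempty, i.e.\ when every block of $G$ contains a non-cut-vertex; this is exactly the clause in Definition~\ref{b31bg} requiring each block to have at least one non-cut-vertex, and it is also what the final paragraph of Remark~\ref{remtaub31} records when it states that the $B^3_1$ block graphs are precisely the block graphs for which $(G,o)$ satisfies the conditions of Theorem~\ref{vwb31}.
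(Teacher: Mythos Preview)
Your proof is correct and follows essentially the same approach as the paper: the paper deduces Theorem~\ref{B31bgNS} from Theorem~\ref{vwb31} together with Remark~\ref{remtaub31}, which is exactly what you do (and you even cite that remark explicitly). Your write-up is in fact more detailed than the paper's one-line justification, spelling out the verification of conditions (a)--(c) for $x=o$.
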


We observe that using Theorem \ref{thrm1} one obtains  a new proof the following known result.

\begin{theorem}
Let a graph $F$ be a forest on $n$ vertices. Then $F$ is nonsingular if and only if it has a perfect matching.
\end{theorem}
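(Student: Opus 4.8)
The plan is to derive this classical characterization as a consequence of Theorem \ref{thrm1}, by reducing a forest via PB-operations and tracking how these operations interact with perfect matchings. Since a forest is a block graph whose every block is an edge $K_2$, its pendant blocks are precisely pendant edges $uv$ where $v$ is a leaf (a non-cut-vertex of weight $0$, since we take $x=o$). For such a pendant edge $B=\{u,v\}$, the only non-cut-vertex is $v$ with weight $0$, so $\bar x^B=(0)$ and $\tau(B)=t(\bar x^B)=\tfrac{1}{1-0}=1$. Hence \emph{every} pendant edge of an unweighted forest is eligible only for PB-deletion, never PB-contraction, and a PB-deletion of $\{u,v\}$ removes both endpoints. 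This is exactly the combinatorial move of deleting a leaf together with its unique neighbor — the standard operation used to test for a perfect matching.

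First I would set up the reduction: starting from $F=(F,o)$, repeatedly PB-delete pendant edges. After each deletion the remaining graph is still a forest (a vertex-weighted block graph), but now some vertices may have acquired... actually no — PB-deletion of an edge-block with $\tau=1$ simply deletes vertices and the corresponding weight entries, so the surviving graph has \emph{all weights still zero}. Thus at every stage we have an unweighted forest, and we may continue deleting pendant edges until we reach a reduced graph $(H,o)$ with no pendant blocks. A forest with no pendant edge is a forest with no edge at all, i.e. a (possibly empty) union of isolated vertices. Now I apply Theorem \ref{thrm1}: $(F,o)$ is singular iff some reduced graph satisfies condition 1 or 2 of that theorem. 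Condition 2 requires two entries of $\bar y^B$ equal to $1$, impossible here since all weights are $0$. Condition 1, specialized to our fully-reduced $(H,o)$: a component $(B,y^B)$ with $B$ a clique, all weights $\ne 1$ (true, they are $0$), and $\tau(B)=1$. A single isolated vertex is $K_1$ with weight $0$; here $\bar x^B=(0)$ and $\tau(K_1)=\tfrac{1}{1-0}=1$, so a component $K_1$ satisfies condition 1. Conversely if $H$ has no isolated vertex then $H$ is empty, no component satisfies condition 1 (and none satisfies 2), and moreover along the way no \emph{intermediate} reduced graph satisfied 1 or 2 either — but I need to be a bit careful here, since Theorem \ref{thrm1} quantifies over all reduced graphs, not just the terminal one. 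The clean way is: $(F,o)$ is nonsingular $\iff$ \emph{no} reduced graph satisfies 1 or 2 $\iff$ (by the argument in the proof of Theorem \ref{thrm1}) the terminal reduced graph fails 1 and 2 $\iff$ the terminal reduced graph has no isolated vertex $\iff$ $F$ can be entirely consumed by successive leaf-plus-neighbor deletions.

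The last step is the purely combinatorial equivalence: a forest $F$ admits a perfect matching if and only if it can be reduced to the empty graph by repeatedly removing a leaf together with its unique neighbor. The forward direction: if $F$ has a perfect matching $M$, every leaf $v$ is matched to its unique neighbor $u$ (since $v$ has only one edge), so $\{u,v\}\in M$; deleting $\{u,v\}$ leaves a forest with perfect matching $M\setminus\{uv\}$, and induction finishes. The reverse direction: the union of the deleted edges over the whole reduction process is a perfect matching of $F$. One subtlety: a nonempty forest always \emph{has} a leaf (any tree component with $\ge 2$ vertices has a leaf), so the only way the process gets stuck with a nonempty graph is if some component is a single isolated vertex — which then has no perfect matching, matching our determinant criterion. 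I expect the main obstacle to be phrasing the interaction with Theorem \ref{thrm1} cleanly: one must note that \emph{every} pendant edge of an unweighted forest forces PB-deletion (never PB-contraction), so the reduction process is essentially canonical, and that the order of leaf deletions does not affect whether one can empty the forest — this last fact follows from the "if and only if perfect matching" combinatorial lemma itself, closing the loop. With that in hand the proof is a short assembly of Theorem \ref{thrm1}, the observation $\tau(K_2,\,(0,*))=1$, and the classical leaf-removal characterization of matchable forests.
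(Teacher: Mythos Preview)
Your approach is essentially the paper's: reduce the forest by PB-deleting pendant edges, argue that each such deletion preserves both nonsingularity (Lemma~\ref{PBdel}) and the existence of a perfect matching, and then read off both properties from the terminal reduced graph.

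There is one slip. You claim that the terminal reduced graph $(H,o)$ is a union of isolated vertices, because ``a forest with no pendant edge is a forest with no edge at all.'' But PB-deletion is not quite the same as leaf-plus-neighbor removal: for an \emph{isolated} edge $K_2$, both endpoints are non-cut-vertices, so $\bar x^B=(0,0)$ and $\tau(B)=2\ne 1$, and neither PB-deletion nor PB-contraction applies. Hence the terminal $(H,o)$ is in general a disjoint union of isolated edges and isolated vertices, exactly as the paper states. The fix is immediate and does not change your argument: an isolated edge is nonsingular (by Theorem~\ref{l1}, since $t(0,0)=2\ne 1$) and has a perfect matching, while an isolated vertex is singular and has none. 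So $H$ is nonsingular $\iff$ $H$ has no isolated vertex $\iff$ $H$ has a perfect matching, and the chain of equivalences closes as you intended. Your worry about Theorem~\ref{thrm1} quantifying over \emph{all} reduced graphs is handled exactly as you say, by appealing to the proof of that theorem (or, more directly, by using Lemma~\ref{PBdel} to transfer nonsingularity step by step to the terminal $H$).
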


\begin{proof}  Let $F$ be a forest, and let $(B,o^B)$ be any pendant edge in $(F,o)$. Then
$\tau(B)=1$  and $(B,o^B)$ may be PB-deleted, yielding a forest $(G,o^G)$.
Note that $F$ has a perfect matching if and only if $G$ has a perfect matching: if the deleted
pendant edge is $\{u,v\}$, with $v$ the
cut-vertex, then in the PB-deletion all edges incident with $v$ are deleted. Thus if $G$ has a perfect matching, adding
$\{u,v\}$ to this matching yields a perfect matching of $F$. And if $F$ has a perfect matching, $\{u,v\}$ has to be
one of the edges in the matching, and removing it yields a perfect matching of $G$.

Given a forest $F$,  reduce $(F,o)$ as much as possible by PB-deletions, until you get a forest $(H,o^H)$ that
has no pendant edges.  Each component of $(H,o)$ is  either an edge, or a singleton. Then $(H,o^H)$
is nonsingular if and only if no component is a singleton, but also $H$ has a perfect matching if and only if no component
 of $H$ is a singleton. By the above, $F$ has a perfect matching if and only if $H$ has a perfect matching, and
 by Theorem \ref{thrm1} $F$ is nonsingular if and only if $H$ is.
 \end{proof}

Next we consider block graphs of a special construction.

\begin{theorem}\label{nkimji}
Let $G$ be a block graph consisting of a block $K_n$, $n\geq 2$, to which at each vertex $i=1,\hdots,n$, $k_i$ blocks of orders $m^i_1,\hdots,m^i_{k_i}$, each greater
than 2 are attached. Then $G$ is nonsingular if and only if $$\sum_{i=1}^n
\frac{1}{1+\sum_{j=1}^{k_i}\Big(\frac{m^i_j-1}{m^i_j-2}\Big)}\neq 1.$$
\end{theorem}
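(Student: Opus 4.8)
The plan is to reduce $(G,o)$ to a single vertex-weighted clique by PB-contractions, and then apply part 3(a) of Theorem~\ref{l1}. The structure of $G$ is a central block $K_n$ with, at each vertex $i$, some number $k_i$ of pendant blocks $K_{m^i_j}$ hanging off. Each such pendant block $K_{m^i_j}$ has $m^i_j-1$ non-cut-vertices, all with weight $0$, so $\tau(K_{m^i_j}) = \sum (1/(1-0)) = m^i_j - 1 > 1$ (here I use $m^i_j > 2$, which guarantees at least two non-cut-vertices, hence $\tau > 1$ and no entry of $\bar{x}^B$ equal to $1$). Therefore each such block may be PB-contracted, and by Definition~\ref{PB-reds} the weight added to its cut-vertex $i$ is
\[
\gamma(K_{m^i_j}, o) = -\frac{\tau}{\tau - 1} = -\frac{m^i_j - 1}{m^i_j - 2}.
\]

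First I would PB-contract, for each fixed $i$, all $k_i$ pendant blocks attached at vertex $i$. Each contraction only changes the weight at vertex $i$ and does not create new pendant blocks or affect the others, so after contracting all of them vertex $i$ carries weight
\[
y_i = 0 - \sum_{j=1}^{k_i}\frac{m^i_j - 1}{m^i_j - 2}.
\]
Doing this for every $i = 1,\dots,n$ reduces $(G,o)$ to the single vertex-weighted clique $(K_n, y)$ with the weights $y_i$ above. By Lemmas~\ref{PBdel} and~\ref{PBcont} (really just Lemma~\ref{PBcont} here, since we only contract), $(G,o)$ is nonsingular if and only if $(K_n,y)$ is. Since each $m^i_j > 2$, every $y_i \le 0 < 1$, so none of the $y_i$ equals $1$ and part 3(a) of Theorem~\ref{l1} applies: $(K_n, y)$ is nonsingular if and only if $t(y) \ne 1$.

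It remains to compute $t(y)$. We have
\[
\frac{1}{1 - y_i} = \frac{1}{1 + \sum_{j=1}^{k_i}\frac{m^i_j - 1}{m^i_j - 2}},
\]
so $t(y) = \sum_{i=1}^n \frac{1}{1 + \sum_{j=1}^{k_i}\left(\frac{m^i_j-1}{m^i_j-2}\right)}$, and $(G,o)$ is nonsingular if and only if this sum is different from $1$, which is exactly the claimed condition. (When $k_i = 0$ the inner sum is empty and the $i$-th term is just $1$, which is consistent with vertex $i$ being a bare vertex of $K_n$ with weight $0$.)

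I do not expect a serious obstacle here; this is a direct application of the reduction machinery. The only points requiring a little care are: checking that $m^i_j > 2$ really does force $\tau > 1$ and rules out a non-cut-vertex weight of $1$ (so that we are in the "no entry is $1$" case of PB-contraction and $\gamma$ has the stated value); verifying that contractions at different vertices, and multiple contractions at the same vertex, do not interfere (they don't, since a PB-contraction at a pendant block attached to $i$ modifies only $x_i$ and removes that block); and confirming that the final $(K_n,y)$ has no weight equal to $1$ so that part 3(a), rather than parts 1--2, of Theorem~\ref{l1} is the relevant case. Each of these is immediate from $m^i_j \ge 3$ and $x \equiv 0$.
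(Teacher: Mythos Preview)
Your proof is correct and follows essentially the same route as the paper: PB-contract every pendant block to reduce $(G,o)$ to a single weighted clique $(K_n,y)$ with $y_i=-\sum_{j}\frac{m^i_j-1}{m^i_j-2}$, and then apply the $t(y)\neq 1$ criterion. The paper's version is terser (it invokes Remark~\ref{remarkgamma} for the value of $\gamma$ and Theorem~\ref{thrm1} for the final step), while you unwind the same computations directly and add the careful checks that the contractions are legitimate and non-interfering; the argument is the same.
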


\begin{proof}
Successively perform PB-contraction of each pendant block of $(G,o)$. Then $(G,o)$ is reduced to a vertex-weighted block  graph $(K_n, x)$. By Remark
\ref{remarkgamma}, $x_i= -\sum_{j=1}^{k_i}\Big(\frac{m^i_j-1}{m^i_j-2}\Big)$. As $x_i\neq 1, i=1,\hdots, n$, we get that $$\tau(K_n, x)=\sum_{i=1}^n
\frac{1}{1+\sum_{j=1}^{k_i}\Big(\frac{m^i_j-1}{m^i_j-2}\Big)}.$$ The result follows by Theorem \ref{thrm1}.
\end{proof}

A special case of  Corollary \ref{nkimji}, where the result is simplified is the following.
Let $n\geq 2, m\geq3, k\geq 1$ be three integers. We define a family of block graph using these three integers. Let us  coalesce $k$ pendant $K_m$ blocks at each vertex of
$K_n$. We call the resulting graph  an {\it $(n,m,k)$-block graph}. As an example the $(4,4,2)$-block graph is shown in Figure \ref{r2biblock}. In the case of
 $(n,m,k)$-block graphs the necessary and  sufficient condition for nonsingularity in  Corollary \ref{nkimji} becomes simple:

\begin{corollary} \label{nmkt}
For $n \geq 1, m\geq 3, k\geq 1$, an $(n,m,k)$-block graph is singular if and only if $$\Bigg(\frac{m-1}{m-2}\Bigg)k=n-1.$$
\end{corollary}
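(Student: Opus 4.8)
The plan is to specialize Corollary~\ref{nkimji} (labeled \texttt{nkimji} in the source) to the case where all the attached blocks are identical. In an $(n,m,k)$-block graph, every vertex $i$ of the central $K_n$ has exactly $k_i=k$ pendant blocks attached, and each of these has order $m^i_j=m$. So for each $i$ the inner sum becomes
\[
\sum_{j=1}^{k_i}\Big(\frac{m^i_j-1}{m^i_j-2}\Big)=k\cdot\frac{m-1}{m-2},
\]
which is independent of $i$. Hence each of the $n$ terms in the sum from Corollary~\ref{nkimji} equals the same quantity
\[
\frac{1}{1+k\frac{m-1}{m-2}},
\]
and the singularity criterion $\sum_{i=1}^n(\cdots)=1$ collapses to
\[
\frac{n}{1+k\frac{m-1}{m-2}}=1,
\]
i.e. $n=1+k\frac{m-1}{m-2}$, which rearranges to $\big(\frac{m-1}{m-2}\big)k=n-1$. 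That is the claimed condition.

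First I would note that Corollary~\ref{nkimji} is stated for $n\ge 2$, while the present statement allows $n\ge 1$, so I would dispose of the boundary case $n=1$ separately. When $n=1$ the ``central'' $K_1$ is a single vertex $v$, and the graph is just $k$ copies of $K_m$ all coalesced at $v$; performing PB-contraction on all $k$ pendant $K_m$ blocks (each has $\tau>1$ since $m\ge 3$, by Remark~\ref{remarkgamma}) reduces $(G,o)$ to the single weighted vertex $(K_1,x)$ with $x_v=-k\frac{m-1}{m-2}\ne 1$, which is nonsingular; on the other side, $\big(\frac{m-1}{m-2}\big)k=n-1=0$ is impossible for $k\ge 1,m\ge 3$, so the ``if and only if'' holds vacuously. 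For $n\ge 2$ the computation above, combined with Corollary~\ref{nkimji}, gives the result directly.

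There is essentially no obstacle here: the corollary is a pure arithmetic specialization of a theorem already proved, and the only subtlety is the off-by-one in the range of $n$, which the separate treatment of $n=1$ handles. I would also remark in passing that $m-2\ge 1$ so all denominators are positive and nonzero, and that $1+k\frac{m-1}{m-2}>1>0$ guarantees the weights $x_i$ never equal $1$, so the hypotheses of Corollary~\ref{nkimji} (namely $x_i\ne 1$) are automatically met — this is what lets us apply it without further checking. The write-up should therefore be just a few lines: reduce to Corollary~\ref{nkimji}, substitute the constant values $k_i=k$, $m^i_j=m$, simplify, and handle $n=1$ by hand.
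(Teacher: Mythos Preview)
Your proposal is correct and matches the paper's approach: the paper presents Corollary~\ref{nmkt} as a direct specialization of Theorem~\ref{nkimji} with no proof given, and your argument is exactly that specialization. Your extra care in treating $n=1$ separately is appropriate, since the paper's definition of an $(n,m,k)$-block graph actually assumes $n\ge 2$ while the corollary is stated for $n\ge 1$; the paper does not address this discrepancy.
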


Another special case of Theorem \ref{nkimji} is the case that  $n=2$.

 \begin{corollary}\label{2kimji}
Let $G$ be a block graph consisting of a block $K_2$, to which at each of the two vertices some blocks of order greater than 2 each are attached. Then $G$ is nonsingular.
\end{corollary}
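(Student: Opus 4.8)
The statement is the $n=2$ case of Theorem \ref{nkimji}, so the plan is to invoke that theorem and show that its singularity criterion cannot be met. First I would specialize the criterion: writing $S_i=\sum_{j=1}^{k_i}\frac{m^i_j-1}{m^i_j-2}$ for $i=1,2$, Theorem \ref{nkimji} says that $G$ is singular exactly when $\frac{1}{1+S_1}+\frac{1}{1+S_2}=1$. Hence it suffices to prove that this sum is never equal to $1$.

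The key elementary fact is that $\frac{m-1}{m-2}=1+\frac{1}{m-2}>1$ for every integer $m\ge 3$. Since by hypothesis at least one block, necessarily of order at least $3$, is attached at each of the two vertices of the central $K_2$, we have $k_1,k_2\ge 1$, so $1+S_i>2$ and therefore $\frac{1}{1+S_i}<\frac12$ for $i=1,2$. Adding these two inequalities gives $\frac{1}{1+S_1}+\frac{1}{1+S_2}<1$, so the singularity condition of Theorem \ref{nkimji} fails and $G$ is nonsingular.

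If one reads the hypothesis more liberally and allows $k_i=0$ at a vertex, the conclusion still holds: when $k_1=0$ we have $S_1=0$, so the first summand equals $1$, while the second summand $\frac{1}{1+S_2}$ is strictly positive; hence the sum exceeds $1$, and in particular is not $1$. So in every case the criterion of Theorem \ref{nkimji} cannot hold.

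I do not expect any real obstacle here. Once Theorem \ref{nkimji} is available, the entire argument reduces to the inequality $\frac{m-1}{m-2}>1$, which forces each denominator $1+S_i$ above $2$; the only point requiring a word of care is whether to allow, or to exclude via the word ``some'', the case of a vertex with no attached block.
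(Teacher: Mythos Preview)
Your argument is correct and essentially identical to the paper's: both apply Theorem~\ref{nkimji} with $n=2$ and observe that each term $\frac{1}{1+S_i}<\frac12$ because $\frac{m-1}{m-2}>1$ for $m\ge 3$, so the sum is strictly less than~$1$. Your additional remark about the $k_i=0$ case is not in the paper but is harmless and correct.
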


 \begin{proof}
This follows from Theorem \ref{nkimji} for $n=2$, as
\[\pushQED{\qed}\sum_{i=1}^2
\frac{1}{1+\sum_{j=1}^{k_i}\Big(\frac{m^i_j-1}{m^i_j-2}\Big)}< \sum_{i=1}^2\frac{1}{2}=1. \qedhere\]
\end{proof}

\begin{figure}
\centering
     \begin{subfigure}[b]{0.35\textwidth}
\begin{tikzpicture}[scale=0.8]

\draw[fill](0,0) circle[radius=0.1];
\draw[fill](1, 0) circle[radius=0.1];
\draw[fill](0, 1) circle[radius=0.1];
\draw[fill](1,1) circle[radius=0.1];

\draw[thick] (0,0)--(1, 0)--(1,1)--(0,1)--cycle;
\draw[thick] (0,0)--(1,1);
\draw[thick] (0,1)--(1,0);

\draw[fill](1,2) circle[radius=0.1];
\draw[thick] (1,1)--(1,2);
\draw[fill](0.7,1.7) circle[radius=0.1];
\draw[fill](1.3,1.7) circle[radius=0.1];
\draw[thick] (1,1)--(1.3, 1.7)--(1,2)--(0.7,1.7)--cycle;
 \draw[thick] (1.3,1.7)--(0.7,1.7);

\draw[fill](2,1) circle[radius=0.1];
\draw[thick] (1,1)--(2,1);
\draw[fill](1.7,1.3)
circle[radius=0.1];
\draw[fill](1.7,0.7) circle[radius=0.1];
\draw[thick] (1,1)--(1.7, 0.7)--(2,1)--(1.7,1.3)--cycle;
 \draw[thick] (1.7,1.3)--(1.7,0.7);

\draw[fill](2,0) circle[radius=0.1];
\draw[thick] (1,0)--(2,0);
\draw[fill](1.7,0.3)
circle[radius=0.1];
\draw[fill](1.7,-0.3) circle[radius=0.1];
\draw[thick] (1,0)--(1.7, -0.3)--(2,0)--(1.7,0.3)--cycle;
 \draw[thick] (1.7,0.3)--(1.7,-0.3);

\draw[fill](1,-1) circle[radius=0.1];
\draw[thick] (1,0)--(1,-1);
\draw[fill](0.7,-0.7) circle[radius=0.1];
\draw[fill](1.3,-0.7) circle[radius=0.1];
\draw[thick] (1,0)--(0.7, -0.7)--(1,-1)--(1.3,-0.7)--cycle;
 \draw[thick] (1.3,-0.7)--(0.7,-0.7);

\draw[fill](0,-1) circle[radius=0.1];
\draw[thick] (0,0)--(0,-1);
\draw[fill](-0.3,-0.7) circle[radius=0.1];
\draw[fill](0.3,-0.7) circle[radius=0.1];
\draw[thick] (0,0)--(-0.3, -0.7)--(0,-1)--(0.3,-0.7)--cycle;
 \draw[thick] (0.3,-0.7)--(-0.3,-0.7);

\draw[fill](-1,0) circle[radius=0.1];
\draw[thick] (0,0)--(-1,0);
\draw[fill](-0.7,0.3)
circle[radius=0.1];
\draw[fill](-0.7,-0.3) circle[radius=0.1];
\draw[thick] (0,0)--(-0.7, -0.3)--(-1,0)--(-0.7,0.3)--cycle;
 \draw[thick] (-0.7,0.3)--(-0.7,-0.3);

\draw[fill](-1,1) circle[radius=0.1];
\draw[thick] (0,1)--(-1,1);
\draw[fill](-0.7,1.3)
circle[radius=0.1];
\draw[fill](-0.7,0.7) circle[radius=0.1];
\draw[thick] (0,1)--(-0.7, 0.7)--(-1,1)--(-0.7,1.3)--cycle;
 \draw[thick] (-0.7,1.3)--(-0.7,0.7);

\draw[fill](0,2) circle[radius=0.1];
\draw[thick] (0,1)--(0,2);
\draw[fill](-0.3,1.7) circle[radius=0.1];
\draw[fill](0.3,1.7) circle[radius=0.1];
\draw[thick] (0,1)--(0.3, 1.7)--(0,2)--(-0.3,1.7)--cycle;
 \draw[thick] (0.3,1.7)--(-0.3,1.7);
\end{tikzpicture}
\caption{A singular $(4,4,2)$-block graph} \label{r2biblock}
    \end{subfigure}  \hspace{10mm} \begin{subfigure}[b]{0.35\textwidth}
\begin{tikzpicture}[scale=0.8]

\draw[fill](0,0) circle[radius=0.1];
\draw[fill](-0.4,.4) circle[radius=0.1];
\draw[fill](0.4,.4) circle[radius=0.1];
\draw[fill](0,.8) circle[radius=0.1];

\draw[thick] (0,0)--(-0.4,.4)--(0.4,.4)--(0,.8)--cycle;
\draw[thick] (0,0)--(0.4,.4);
\draw[thick] (0,0.8)--(-0.4,.4);
\draw[fill](1.5,1.5) circle[radius=0.1];
\draw[thick] (0,0.8)--(1.5,1.5);

\draw[fill](1,1.6) circle[radius=0.1];
\draw[fill](1.1,2.2) circle[radius=0.1];
\draw[thick] (1,1.6)--(1.1,2.2);
\draw[thick] (1,1.6)--(1.5,1.5);
\draw[thick] (1.1,2.2)--(1.5,1.5);

\draw[fill](1.2,1) circle[radius=0.1];
\draw[fill](1.8,1) circle[radius=0.1];
\draw[fill](1.5,0.6) circle[radius=0.1];

\draw[thick] (1.2,1)--(1.5,1.5);
\draw[thick] (1.8,1)--(1.5,1.5);
\draw[thick] (1.5,0.6)--(1.5,1.5);
\draw[thick] (1.2,1)--(1.5,0.6);
\draw[thick] (1.2,1)--(1.8,1);
\draw[thick] (1.8,1)--(1.5,0.6);

\draw[fill](3,3) circle[radius=0.1];
\draw[fill](2.5,3.05) circle[radius=0.1];
\draw[fill](2.8,3.5) circle[radius=0.1];
\draw[thick] (1.5,1.5)--(3,3);
\draw[thick] (2.5,3.05)--(3,3);
\draw[thick] (2.8,3.5)--(3,3);
\draw[thick] (2.5,3.05)--(2.8,3.5);

\draw[fill](4,1.2) circle[radius=0.1];
\draw[fill] (3.6,1.7) circle[radius=0.1];
\draw[fill](4.4,1.6) circle[radius=0.1];
\draw[fill](4.05,2.1) circle[radius=0.1];

\draw[thick] (4,1.2)-- (1.5,1.5);
\draw[thick] (3.6,1.7)--(4.4,1.6);
\draw[thick] (4.05,2.1)-- (4,1.2);
\draw[thick] (3.6,1.7)-- (4,1.2);
\draw[thick] (4.4,1.6)-- (4,1.2);
\draw[thick] (4.05,2.1)-- (4.4,1.6);
\draw[thick] (4.05,2.1)-- (3.6,1.7);
\draw[fill] (3.6,0.8) circle[radius=0.1];
\draw[fill](4.4,0.8) circle[radius=0.1];
\draw[fill](4.05,0.5) circle[radius=0.1];
\draw[thick] (3.6,0.8)--(4.4,0.8);
\draw[thick] (4.05,0.5)-- (4,1.2);
\draw[thick] (3.6,0.8)-- (4,1.2);
\draw[thick] (4.4,0.8)-- (4,1.2);
\draw[thick] (4.05,0.5)-- (4.4,0.8);
\draw[thick] (4.05,0.5)-- (3.6,0.8);

\draw[fill](6.5,1.4) circle[radius=0.1];
\draw[fill](6,1.8) circle[radius=0.1];
\draw[fill](7,1.8) circle[radius=0.1];
\draw[fill](6,1) circle[radius=0.1];
\draw[fill](7,1) circle[radius=0.1];
\draw[thick] (6.5,1.4)-- (6,1.8);
\draw[thick] (6.5,1.4)-- (7,1.8);
\draw[thick] (6.5,1.4)-- (6,1);
\draw[thick] (6.5,1.4)-- (7,1);
\draw[thick] (7,1.8)-- (6,1.8);
\draw[thick] (6.5,1.4)-- (4,1.2);
\draw[thick] (6,1)-- (7,1);

\end{tikzpicture}
\caption{A nonsingular block graph, Theorem \ref{mnktree} \label{ct}}
    \end{subfigure}   \vspace{10mm}

     \begin{subfigure}[b]{0.35\textwidth}
 \begin{tikzpicture}[scale=0.8]

\draw[fill](0,0) circle[radius=0.1];
\draw[fill](1, 0) circle[radius=0.1];
\draw[fill](0, 1) circle[radius=0.1];
\draw[fill](1,1) circle[radius=0.1];

\draw[thick] (0,0)--(1, 0)--(1,1)--(0,1)--cycle;
\draw[thick] (0,0)--(1,1);
\draw[thick] (0,1)--(1,0);

\draw[fill](1,2) circle[radius=0.1];
\draw[thick] (1,1)--(1,2);
\draw[fill](0.7,1.7) circle[radius=0.1];
\draw[fill](1.3,1.7) circle[radius=0.1];
\draw[thick] (1,1)--(1.3, 1.7)--(1,2)--(0.7,1.7)--cycle;
 \draw[thick] (1.3,1.7)--(0.7,1.7);

\draw[fill](2,1) circle[radius=0.1];
\draw[thick] (1,1)--(2,1);
\draw[fill](1.7,1.3)
circle[radius=0.1];
\draw[fill](1.7,0.7) circle[radius=0.1];
\draw[thick] (1,1)--(1.7, 0.7)--(2,1)--(1.7,1.3)--cycle;
 \draw[thick] (1.7,1.3)--(1.7,0.7);

\draw[fill](2,0) circle[radius=0.1];
\draw[thick] (1,0)--(2,0);
\draw[fill](1.7,0.3)
circle[radius=0.1];
\draw[fill](1.7,-0.3) circle[radius=0.1];
\draw[thick] (1,0)--(1.7, -0.3)--(2,0)--(1.7,0.3)--cycle;
 \draw[thick] (1.7,0.3)--(1.7,-0.3);

\draw[fill](1,-1) circle[radius=0.1];
\draw[thick] (1,0)--(1,-1);
\draw[fill](0.7,-0.7) circle[radius=0.1];
\draw[fill](1.3,-0.7) circle[radius=0.1];
\draw[thick] (1,0)--(0.7, -0.7)--(1,-1)--(1.3,-0.7)--cycle;
 \draw[thick] (1.3,-0.7)--(0.7,-0.7);

\draw[fill](0,-1) circle[radius=0.1];
\draw[thick] (0,0)--(0,-1);
\draw[fill](-0.3,-0.7) circle[radius=0.1];
\draw[fill](0.3,-0.7) circle[radius=0.1];
\draw[thick] (0,0)--(-0.3, -0.7)--(0,-1)--(0.3,-0.7)--cycle;
 \draw[thick] (0.3,-0.7)--(-0.3,-0.7);

\draw[fill](-1,0) circle[radius=0.1];
\draw[thick] (0,0)--(-1,0);
\draw[fill](-0.7,0.3)
circle[radius=0.1];
\draw[fill](-0.7,-0.3) circle[radius=0.1];
\draw[thick] (0,0)--(-0.7, -0.3)--(-1,0)--(-0.7,0.3)--cycle;
 \draw[thick] (-0.7,0.3)--(-0.7,-0.3);
 \end{tikzpicture}
  \caption{A $B^3_1$-block graph}\label{b31fig}
    \end{subfigure} \hspace{15mm} \begin{subfigure}[b]{0.4\textwidth}
 \begin{tikzpicture}[scale=0.8]

\draw[fill](0,0) circle[radius=0.1];
\draw[fill](1, 0) circle[radius=0.1];
\draw[fill](0, 1) circle[radius=0.1];
\draw[fill](1,1) circle[radius=0.1];

\draw[thick] (0,0)--(1, 0)--(1,1)--(0,1)--cycle;
\draw[thick] (0,0)--(1,1);
\draw[thick] (0,1)--(1,0);

\draw[fill](1,2) circle[radius=0.1];
\draw[thick] (1,1)--(1,2);
\draw[fill](0.7,1.7) circle[radius=0.1];
\draw[fill](1.3,1.7) circle[radius=0.1];
\draw[thick] (1,1)--(1.3, 1.7)--(1,2)--(0.7,1.7)--cycle;
 \draw[thick] (1.3,1.7)--(0.7,1.7);

\draw[fill](2,1) circle[radius=0.1];
\draw[thick] (1,1)--(2,1);
\draw[fill](1.7,1.3)
circle[radius=0.1];
\draw[fill](1.7,0.7) circle[radius=0.1];
\draw[thick] (1,1)--(1.7, 0.7)--(2,1)--(1.7,1.3)--cycle;
 \draw[thick] (1.7,1.3)--(1.7,0.7);

\draw[fill](2,0) circle[radius=0.1];
\draw[thick] (1,0)--(2,0);
\draw[fill](1.7,0.3)
circle[radius=0.1];
\draw[fill](1.7,-0.3) circle[radius=0.1];
\draw[thick] (1,0)--(1.7, -0.3)--(2,0)--(1.7,0.3)--cycle;
 \draw[thick] (1.7,0.3)--(1.7,-0.3);

\draw[fill](1,-1) circle[radius=0.1];
\draw[thick] (1,0)--(1,-1);
\draw[fill](0.7,-0.7) circle[radius=0.1];
\draw[fill](1.3,-0.7) circle[radius=0.1];
\draw[thick] (1,0)--(0.7, -0.7)--(1,-1)--(1.3,-0.7)--cycle;
 \draw[thick] (1.3,-0.7)--(0.7,-0.7);

\draw[fill](0,-1) circle[radius=0.1];
\draw[thick] (0,0)--(0,-1);
\draw[fill](-0.3,-0.7) circle[radius=0.1];
\draw[fill](0.3,-0.7) circle[radius=0.1];
\draw[thick] (0,0)--(-0.3, -0.7)--(0,-1)--(0.3,-0.7)--cycle;
 \draw[thick] (0.3,-0.7)--(-0.3,-0.7);

\draw[fill](-1,0) circle[radius=0.1];
\draw[thick] (0,0)--(-1,0);
\draw[fill](-0.7,0.3)
circle[radius=0.1];
\draw[fill](-0.7,-0.3) circle[radius=0.1];
\draw[thick] (0,0)--(-0.7, -0.3)--(-1,0)--(-0.7,0.3)--cycle;
 \draw[thick] (-0.7,0.3)--(-0.7,-0.3);

\draw[fill](-1,1) circle[radius=0.1];
\draw[thick] (0,1)--(-1,1);
\draw[fill](-0.7,1.3)
circle[radius=0.1];
\draw[fill](-0.7,0.7) circle[radius=0.1];
\draw[thick] (0,1)--(-0.7, 0.7)--(-1,1)--(-0.7,1.3)--cycle;
 \draw[thick] (-0.7,1.3)--(-0.7,0.7);

 \end{tikzpicture}
  \caption{A nonsingular block graph, Theorem \ref{nkimji} }\label{notau}
    \end{subfigure}
\caption{Examples of block graphs.}
\end{figure}
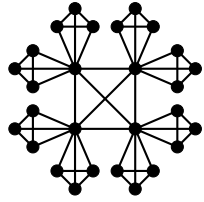
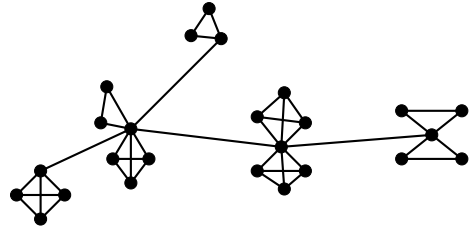
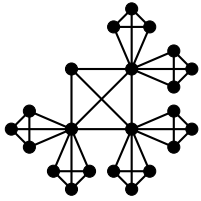
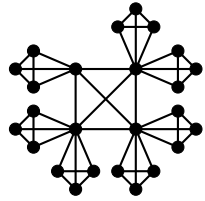

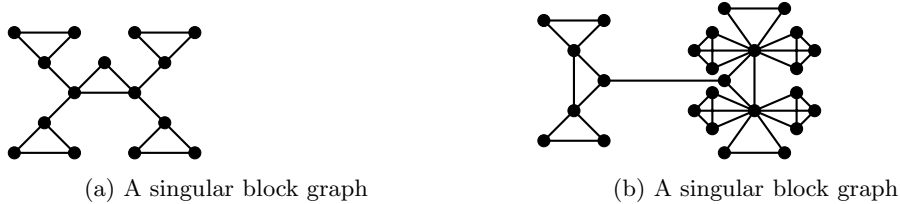
\begin{figure} 
\centering
     \begin{subfigure}[b]{0.35\textwidth}
\begin{tikzpicture}[scale=0.8]

\draw[fill](0,0) circle[radius=0.1];
\draw[fill](1, 0) circle[radius=0.1];
\draw[fill](0.5, 0.5) circle[radius=0.1];

\draw[thick] (0,0)--(1, 0)--(0.5,0.5)--cycle;
\draw[fill](1.5, 0.5) circle[radius=0.1];
\draw[fill](2, 1) circle[radius=0.1];
\draw[fill](1, 1) circle[radius=0.1];
\draw[thick] (1,1)--(2, 1)--(1.5,0.5)--cycle;

\draw[thick] (1,0)--(1.5,0.5);

\draw[fill](-0.5, 0.5) circle[radius=0.1];
\draw[fill](-1, 1) circle[radius=0.1];
\draw[fill](0, 1) circle[radius=0.1];
\draw[thick] (0,1)--(-1, 1)--(-0.5,0.5)--cycle;

\draw[thick] (0,0)--(-0.5,0.5);

\draw[fill](-0.5, -0.5) circle[radius=0.1];
\draw[fill](-1, -1) circle[radius=0.1];
\draw[fill](0, -1) circle[radius=0.1];
\draw[thick] (0,-1)--(-1, -1)--(-0.5,-0.5)--cycle;

\draw[thick] (0,0)--(-0.5,-0.5);

\draw[fill](1.5, -0.5) circle[radius=0.1];
\draw[fill](2, -1) circle[radius=0.1];
\draw[fill](1, -1) circle[radius=0.1];
\draw[thick] (1,-1)--(2, -1)--(1.5,-0.5)--cycle;

\draw[thick] (1,0)--(1.5,-0.5);
\end{tikzpicture}
\caption{A singular block graph} \label{r2biblock}
    \end{subfigure}  \hspace{10mm} \begin{subfigure}[b]{0.35\textwidth}
\begin{tikzpicture}[scale=0.8]

\draw[fill](0,0) circle[radius=0.1];
\draw[fill](2,0) circle[radius=0.1];

\draw[fill](-0.5,-0.5) circle[radius=0.1];
\draw[fill](-0.5,0.5) circle[radius=0.1];

\draw[thick] (0,0)--(-0.5,-0.5)--(-0.5,0.5)--cycle;

\draw[fill](0,1) circle[radius=0.1];
\draw[fill](-1,1) circle[radius=0.1];

\draw[thick] (0,1)--(-0.5,0.5)--(-1,1)--cycle;
\draw[fill](0,-1) circle[radius=0.1];
\draw[fill](-1,-1) circle[radius=0.1];

\draw[thick] (0,-1)--(-0.5,-0.5)--(-1,-1)--cycle;

\draw[thick] (0,0)--(2,0);
\draw[fill](2.5,0.5) circle[radius=0.1];
\draw[fill](2.5,-0.5) circle[radius=0.1];
\draw[thick] (2,0)--(2.5,-0.5)--(2.5,0.5)--cycle;

\draw[fill](2,-1.2) circle[radius=0.1];
\draw[fill](3,-1.2) circle[radius=0.1];
\draw[thick] (2,-1.2)--(3,-1.2)--(2.5,-0.5)--cycle;

\draw[fill](2,1.2) circle[radius=0.1];
\draw[fill](3,1.2) circle[radius=0.1];
\draw[thick] (2,1.2)--(3,1.2)--(2.5,0.5)--cycle;

\draw[fill](3.5,0.5) circle[radius=0.1];
\draw[fill](3.2,0.8) circle[radius=0.1];
\draw[fill](3.2,0.2) circle[radius=0.1];
\draw[thick] (2.5,0.5)--(3.5,0.5)--(3.2,0.8)--(3.2,0.2)--cycle;
\draw[thick] (2.5,0.5)--(3.2,0.8);
\draw[thick] (3.5,0.5)--(3.2,0.2);

\draw[fill](1.5,0.5) circle[radius=0.1];
\draw[fill](1.8,0.8) circle[radius=0.1];
\draw[fill](1.8,0.2) circle[radius=0.1];
\draw[thick] (2.5,0.5)--(1.5,0.5)--(1.8,0.8)--(1.8,0.2)--cycle;
\draw[thick] (2.5,0.5)--(1.8,0.8);
\draw[thick] (1.5,0.5)--(1.8,0.2);

\draw[fill](3.5,-0.5) circle[radius=0.1];
\draw[fill](3.2,-0.2) circle[radius=0.1];
\draw[fill](3.2,-0.8) circle[radius=0.1];
\draw[thick] (2.5,-0.5)--(3.5,-0.5)--(3.2,-0.2)--(3.2,-0.8)--cycle;
\draw[thick] (2.5,-0.5)--(3.2,-0.2);
\draw[thick] (3.5,-0.5)--(3.2,-0.8);

\draw[fill](1.5,-0.5) circle[radius=0.1];
\draw[fill](1.8,-0.2) circle[radius=0.1];
\draw[fill](1.8,-0.8) circle[radius=0.1];
\draw[thick] (2.5,-0.5)--(1.5,-0.5)--(1.8,-0.2)--(1.8,-0.8)--cycle;
\draw[thick] (2.5,-0.5)--(1.8,-0.2);
\draw[thick] (1.5,-0.5)--(1.8,-0.8);

\end{tikzpicture}
\caption{A singular block graph\label{ct}}
    \end{subfigure}
\caption{Examples in support of the necessity of conditions (a) and (b), respectively, in Theorem \ref{treeofb13g}. }\label{fig2}
\end{figure}

Next we consider the following construction.
\begin{definition} \label{treeofBG}\textbf{A tree of block graphs. }{\rm
Let $T$ be a tree on $k$ vertices, and let $G_1, \dots, G_k$ be
block graphs. For every edge $e=\{i,j\}$ of $T$, choose a vertex $u_e$ of $G_i$ and $v_e$ of $G_j$, and connect
$u_e$ and $v_e$ by an edge. The resulting graph $G$ is a block graph, and we call such graph
a {\it tree of $G_1, \dots, G_k$}.
We refer to each of the edges $\{u_e,v_e\}$ in $G$   as a \emph{skeleton edge}, and to the vertices $u_e$ and $v_e$ as \emph{skeleton vertices}.
The graph $G_i$ is considered {\it pendant} in the tree  of $G_1, \dots, G_k$ if the vertex $i$ is pendant in $T$.}
\end{definition}

The first result on a tree of block graphs  generalizes  Corollary \ref{2kimji}.

\begin{theorem}\label{mnktree}
Let $T$ be a tree with $n$ vertices $i=1,\hdots,n$, and let $d(i)$ be the degree of vertex $i$ in $T$. Let $G$ be the graph obtained by coalescing  $k_i$ cliques
$K_{m_1^i},\hdots,K_{m_{k_i}^i}$, each of order at least $3$,  at each vertex $i$ of $T$. If
 $$ \sum_{j=1}^{k_i}\frac{m^i_j-1}{m^i_j-2}>d(i)$$
 for every $i$, then  $G$ is nonsingular.
\end{theorem}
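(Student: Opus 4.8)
The plan is to reduce $(G,o)$ by successive PB-contractions to a vertex-weighted tree and then to show that this tree is nonsingular. First I would dispose of the trivial case $n=1$: there $G$ is a coalescence of cliques of order at least $3$ at a single vertex, hence a $B^3_1$ block graph, and is nonsingular by Theorem~\ref{B31bgNS}; so assume $n\ge 2$. Now, exactly as in the proof of Theorem~\ref{nkimji}, each clique $K_{m^i_j}$ attached at a vertex $i$ of $T$ is a pendant block of $(G,o)$ all of whose non-cut-vertices carry weight $0$, so $\tau(K_{m^i_j})=m^i_j-1\ge 2\ne 1$ and the block may be PB-contracted; by Remark~\ref{remarkgamma} this adds $\gamma(K_{m^i_j})=-\frac{m^i_j-1}{m^i_j-2}$ to the weight of $i$. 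Performing all $k_i$ such contractions at every vertex $i$ (the remaining attached cliques stay pendant throughout) reduces $(G,o)$ to the vertex-weighted tree $(T,x)$ with $x_i=-\sum_{j=1}^{k_i}\frac{m^i_j-1}{m^i_j-2}$. The hypothesis of the theorem is then precisely the statement that $x_i<-d(i)$ for every vertex $i$ of $T$.

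It therefore suffices to prove the following claim: if $(S,z)$ is any vertex-weighted tree with $z_i<-d_S(i)$ for every vertex $i$, then $(S,z)$ is nonsingular. I would prove this by induction on $|V(S)|$. If $|V(S)|=1$ then $z_i<0$ and $A(S,z)=[z_i]$ is nonsingular; if $S=K_2$ with vertices $u,v$ then $z_u,z_v<-1$, so $\det A(S,z)=z_uz_v-1>0$. Otherwise pick a leaf $v$ of $S$ and its neighbour $u$; since $|V(S)|\ge 3$ we have $d_S(u)\ge 2$, so $u$ is a cut-vertex and $\{u,v\}$ is a pendant block with $\bar z^{\{u,v\}}=(z_v)$. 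As $z_v<-1$, we get $\tau(\{u,v\})=\frac{1}{1-z_v}\in(0,\tfrac12)$, so this block may be PB-contracted, and a short computation gives $\gamma(\{u,v\})=-\frac{\tau}{\tau-1}=-\frac{1}{z_v}\in(0,1)$. In the resulting weighted tree $(S',z')$ with $S'=S\setminus v$, every vertex other than $u$ keeps its weight and its degree, while $d_{S'}(u)=d_S(u)-1$ and $z'_u=z_u+\gamma(\{u,v\})<z_u+1<-d_S(u)+1=-d_{S'}(u)$. Hence $(S',z')$ again satisfies the hypothesis, so by induction it is nonsingular, and then $(S,z)$ is nonsingular by Lemma~\ref{PBcont}. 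Applying the claim to $(T,x)$ and invoking Lemmas~\ref{PBdel} and~\ref{PBcont} (equivalently, Theorem~\ref{thrm1}) shows that $G=(G,o)$ is nonsingular.

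The only substantive step is the inductive claim in the second paragraph, and the crux of it is recognizing that $z_i<-d(i)$ is the right invariant to carry through the reduction. Once that is identified, the key observation is simply that when a pendant edge $\{u,v\}$ with $z_v<-1$ is PB-contracted, the increment $\gamma(\{u,v\})=-1/z_v$ to the weight of $u$ lies strictly in $(0,1)$, exactly absorbing the unit drop in the degree of $u$; this is what keeps the invariant alive along the whole reduction. Everything else — stage one, the $n=1$ and $K_2$ base cases, and the bookkeeping of which cliques remain pendant — is routine.
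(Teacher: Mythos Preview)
Your proof is correct, but it takes a longer route than the paper at the final step. Both arguments begin identically: PB-contract every attached clique to obtain the vertex-weighted tree $(T,x)$ with $x_i=-\sum_{j}\frac{m^i_j-1}{m^i_j-2}$, so that the hypothesis becomes $|x_i|>d(i)$. At this point the paper simply observes that $A(T,x)=A(T)+\diag(x)$ is strictly diagonally dominant (the off-diagonal entries in row $i$ are $0/1$ and sum to exactly $d(i)$), hence nonsingular, and concludes via Theorem~\ref{thrm1}. You instead set up an induction that continues the PB-contraction process down the tree, maintaining the invariant $z_i<-d_S(i)$; the key computation $\gamma(\{u,v\})=-1/z_v\in(0,1)$ is exactly what is needed, and the argument goes through cleanly. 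What your approach buys is that it stays entirely inside the paper's reduction machinery and never appeals to an outside linear-algebra fact; what it costs is the whole inductive apparatus in place of a one-line observation. (Your separate treatment of $n=1$ is harmless but unnecessary: your own base case $|V(S)|=1$ already covers it, and no PB-deletions are actually used, so the reference to Lemma~\ref{PBdel} can be dropped.)
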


\begin{proof}
By PB-contractions of all pendant blocks in $(G,o)$ we obtain the reduced vertex-weighted tree $(T,x)$, where
$$x_i=\sum_{j=1}^{k_i}-\frac{m^i_j-1}{m^i_j-2}.$$ If $|x_i|> d(i)$ for every $i$, then
$A(T,x)$ is a strictly diagonal dominant matrix, and therefore  nonsingular. The result now follows from Theorem \ref{thrm1}.
\end{proof}

Next consider trees of $B^3_1$ block graphs.

\begin{theorem}\label{treeofb13g}
Let $G$ be a tree of $B^3_1$ block  graphs  $G_1,\dots, G_k$, in which
\begin{enumerate}
\item[(a)] no two skeleton edges share a vertex,
\item[(b)] there is at least one non-cut vertex in any block that has at  $3$ vertices or more.
\end{enumerate}
Then $G$ is nonsingular.
\end{theorem}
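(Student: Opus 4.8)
The plan is to prove, by induction on $k$, the following slightly stronger statement $(\mathcal G)$: if $G$ is a tree of $B^3_1$ block graphs satisfying (a) and (b), and $x$ is a weight vector with $0\le x_w<1$ for every vertex $w$ and $x_w=0$ for every skeleton vertex, then $(G,x)$ is nonsingular. The generalization to weighted vertices is what makes the induction close up: PB-contracting a skeleton edge deposits a positive weight on a skeleton vertex of the remaining graph, and condition (a) is exactly what guarantees that this vertex is no longer a skeleton vertex of the smaller graph, so the hypothesis of $(\mathcal G)$ is preserved. The desired theorem is the case $x=o$.

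For the base case $k=1$ (so $G=G_1$ is a single $B^3_1$ block graph) I would reduce $(G_1,x)$ by repeatedly PB-contracting pendant blocks. Two observations drive this. First, a PB-contraction deletes a pendant block but leaves its cut vertex, and the deleted non-cut vertices lie in no other block currently present; hence every block still present is a full original block of $G_1$, so of size $\ge 3$. Second, by condition (b) each such block $B$ contains a vertex $v_0$ that is a non-cut vertex of $G$; such a vertex lies in a single block of $G$, so it is never the cut vertex of a contracted block, and its weight stays in $[0,1)$. When $B$ becomes pendant it then has at least two non-cut vertices, one of weight in $[0,1)$, so $\tau(B)\ge 1+(\text{something}>0)>1$; thus the PB-contraction is legitimate and adds $\gamma(B)=-\tfrac{\tau(B)}{\tau(B)-1}<-1$ to the cut vertex, keeping all weights below $1$. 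Since PB-contractions preserve connectivity, the process terminates at a single clique $C$, which is an original block of $G_1$, so $|C|\ge 3$; by the same estimate $\tau(C)>1$ while all weights are $<1$, so $(C,y)$ is nonsingular by part~3(a) of Theorem~\ref{l1}, and so is $(G_1,x)$ by Lemma~\ref{PBcont}.

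For the inductive step $k\ge 2$, pick a leaf $\ell$ of the skeleton tree $T$, with skeleton edge $e=\{a_\ell,b_\ell\}$, $a_\ell\in V(G_\ell)$. As $\ell$ is a leaf, $G_\ell$ is attached to the rest of $G$ only through $a_\ell$, so I would collapse $G_\ell$ entirely onto $a_\ell$ by successive PB-contractions, exactly as in the base case. Since $x_{a_\ell}=0$ (it is a skeleton vertex) and each block contracted into $a_\ell$ contributes a summand below $-1$, the final weight $\gamma_\ell$ at $a_\ell$ satisfies $\gamma_\ell<-1$. Now $\{a_\ell,b_\ell\}$ is a pendant edge; as $\gamma_\ell\ne 1$ and $\tau(\{a_\ell,b_\ell\})=\tfrac1{1-\gamma_\ell}\in(0,1)$, it may be PB-contracted, adding $-1/\gamma_\ell\in(0,1)$ to $x_{b_\ell}=0$. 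The resulting graph $G'=G\setminus G_\ell$ is a tree of $B^3_1$ block graphs over $T-\ell$, still satisfying (a) and (b), its new weight vector lies in $[0,1)^{V(G')}$, and it vanishes on every skeleton vertex of $G'$ --- here is where (a) is essential: $b_\ell$ belongs to no skeleton edge of $G$ other than $e$, hence is not a skeleton vertex of $G'$, so assigning it weight $-1/\gamma_\ell$ does not violate the hypothesis. By the induction hypothesis $(G',x')$ is nonsingular, and therefore so is $(G,x)$ by Lemma~\ref{PBcont}.

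The step needing the most care is the bookkeeping of weights, i.e.\ verifying that the reduction never gets stuck: that $\tau(B)>1$ for every pendant block $B$ met during a collapse (which uses (b), together with the fact that a non-cut vertex of $G$ sits in a single block and so is never re-weighted, and that current blocks remain full so $|B|\ge 3$), and that the deposit produced by contracting a skeleton edge stays in $(0,1)$, which forces $\gamma_\ell<-1$ and hence requires the collapsed skeleton vertex to start at weight $0$ --- precisely what (a) buys us through the inductive hypothesis. Without (a) one could pile several positive deposits onto one skeleton vertex and push a weight past $1$, after which the estimate $\gamma<-1$ (and then $-1/\gamma<1$) breaks; without (b) a pendant block with a single non-cut vertex could have $\tau=1$, producing a singular reduced clique. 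I would also record the fact, used above, that because PB-contractions keep present blocks full, the terminal clique in the base case always has at least three vertices, so the unique singular clique $(K_1,0)$ never arises.
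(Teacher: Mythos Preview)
Your proof is correct and follows essentially the same route as the paper's: induction on $k$ with a strengthened weighted hypothesis, collapsing a pendant $G_\ell$ by PB-contractions and then PB-contracting the skeleton edge, using condition~(a) to ensure the deposit lands on a non-skeleton vertex of the smaller tree. The only difference is in the choice of invariant: the paper carries the extra requirement that every block of size $\ge 3$ has a non-cut vertex of weight exactly $0$ (which lets it invoke Theorem~\ref{vwb31} for the base case), whereas you carry $0\le x_w<1$ for all $w$ and argue the base case directly. Both invariants yield $\tau(B)>1$ for every pendant block encountered and propagate through the skeleton-edge contraction, so the two arguments coincide in substance.
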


\begin{proof}
For such $G$, consider weight vectors $x$ with the following three properties:
  \begin{enumerate}
  \item $x_i<1$ for every $i$.
  \item $x_i=0$ for any skeleton vertex.
  \item For any block $B$ of $G$ with at least three vertices $\bar{x}^B_i=0$ for at
least one vertex $i$.
\end{enumerate}

We show, by induction on $k$, that if $G$ is as in the theorem, and a weight
vector $x$ for $G$ satisfies 1--3, then $(G,x)$ is nonsingular. (As the weight vector $o$ satisfies 1--3,
this will prove the theorem.)

For $k=1$, this holds by Theorem \ref{vwb31}.
Suppose the result holds for any such vertex-weighted tree of $k$ $B^3_1$  block graphs, and let $G$ be a tree of $B^3_1$ block graphs
 $G_1, \dots, G_{k+1}$ that satisfies (a) and (b), and $x$
is a weight vector for $G$, satisfying 1--3.
Without loss of generality, $G_1$ is  pendant in $G$. Let $u\in V(G_1)$ and $v\in V(G\setminus G_1)$ be skeleton vertices.
Then
\[A(G,x)=\begin{bmatrix}
A_1& w_1 & o & O^T\\ w_1^T & 0 & 1 & o^T \\ o^T & 1 & 0 & w_2^T\\ O & o & w_2 & A_2
\end{bmatrix},
\] where $w_1$ and $w_2$ are (0,1)-column vectors, and
$\begin{bmatrix}A_1&w_1\\
w_1^T&0
\end{bmatrix}$
is $A(G_1,x^{G_1})$, and
$\begin{bmatrix}
0&w_2^T\\
w_2&A_2
\end{bmatrix}$
is the adjacency matrix of $(G\setminus G_1, x^{G\setminus G_1})$. As each block graph has at leas two pendant blocks, we
may perform subsequent PB-contractions of blocks in $G_1$, leaving the block containing the skeleton vertex in $G_1$
to last. After these contractions, the  remaining block $(B,b)$   satisfies (a)--(c) of Theorem \ref{vwb31}.
Moreover, $b_i=0$  at least one non-cut-vertex $i$ of $B$. Thus $\tau(B)>1$, and we may contract it also. The adjacency matrix
of the resulting vertex-weighted graph is
\[\begin{bmatrix}
\gamma  & 1 & o^T \\ 1 & 0 & w_2^T\\ o & w_2 &A_2
\end{bmatrix},\]
where $\gamma<-1$ by part 1 of Remark \ref{remarkgamma}.
The pendant edge of this graph has $\tau=\frac{1}{ 1-\gamma}<1$ and may be PB-contracted,
resulting in a weight of $\alpha=-\frac{\tau}{1-\tau}=-\frac{1}{\gamma}<1$ to the vertex $v$.
The resulting  vertex-weighted graph is $(H,y)$, where $H=G\setminus G_1$ is a tree of $G_2, \dots, G_k$, and
$y_i=x_i<1$ for every vertex except  $v$,
whose weight is $\alpha<1$. Note that $v$ is not a skeleton vertex in $H$ (due to the
assumption that in $G$ no two skeleton edges share a vertex).
Thus $y$ satisfies 1--3, and by the induction hypothesis $(H,y)$ is nonsingular. By Theorem \ref{thrm1} so is $(G,x)$
\end{proof}

None of the two conditions (a) and (b)  in Theorem \ref{treeofb13g} may be dropped --- see examples in Figure \ref{fig2}. 

\begin{theorem}\label{taueff+pendedge}
Let $G$ be a block graph, in which each block has at least two non-cut-vertices. Then any graph $G'$
obtained by coalescing edges at some of the cut vertices of $G$ is nonsingular.
\end{theorem}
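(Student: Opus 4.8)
The plan is to strip off the new pendant edges by PB-deletions, returning to an induced subgraph of $G$, and then apply Theorem \ref{tauthr}. Let $S$ be the set of cut-vertices of $G$ at which a pendant edge is coalesced, so that $G'$ arises from $G$ by adding, for each $v\in S$, a new leaf $\ell_v$ together with the edge $\{v,\ell_v\}$. In $(G',o)$ each block $B=\{v,\ell_v\}$ is a pendant block whose only non-cut-vertex is $\ell_v$, so $\bar{x}^B=(0)$ and $\tau(B)=1$; hence by Lemma \ref{PBdel} it may be PB-deleted without changing singularity.

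Next I would check that these $|S|$ PB-deletions can be performed successively. At every stage $\ell_v$ is still a non-cut-vertex, and $v$ still has at least two neighbours that will never be deleted: take two non-cut-vertices of some block of $G$ containing $v$; these exist by hypothesis, lie in $V(G)$, and being non-cut-vertices they belong to neither $S$ nor $\{\ell_w : w\in S\}$. So $\{v,\ell_v\}$ remains a pendant block throughout, and after all the deletions we are left with $(H,o^H)$, where $H=G\setminus S$.

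The core of the proof is then to check that every component of $H$ is a connected block graph in which each block has at least two non-cut-vertices, and that $H$ has no isolated vertex. For this I would use that a block $B$ of $G$ has at most $|V(B)|-2$ cut-vertices (it has at least two non-cut-vertices), so $B\setminus S$ always has at least two vertices; in fact $B\setminus S$ still contains all non-cut-vertices of $B$, and since these have all their neighbours inside $B$ they remain non-cut-vertices in $H$. In a block graph, deleting vertices never merges two blocks and never enlarges a clique into a bigger block, so the blocks of $H$ are precisely the sets $B\setminus S$, and the same counting shows that no vertex of $H$ becomes isolated. By Remark \ref{remtaub31}, with the zero weight vector each component of $H$ satisfies the hypotheses of Theorem \ref{tauthr}, and is therefore nonsingular; hence $H$ is nonsingular, and since $(H,o^H)$ is a reduced vertex-weighted block graph of $(G',o)$, Theorem \ref{thrm1} (equivalently, $|S|$ applications of Lemma \ref{PBdel}) gives that $G'$ is nonsingular.

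I expect the main obstacle to be the bookkeeping in the third paragraph: verifying carefully that ``each block has at least two non-cut-vertices'' is inherited by every component of $G\setminus S$ and that no isolated vertex appears. Both rest on the bound (number of cut-vertices of a block) $\le$ (order of the block) $-\,2$ and on the fact that a non-cut-vertex of a block has no neighbour outside that block; the successive-deletion claim of the second paragraph is what lets these two facts be applied cleanly, which is why I isolate it. (This argument presupposes that at most one pendant edge is attached at each cut-vertex; two edges at the same vertex would create two equal rows in $A(G')$, hence a singular graph.)
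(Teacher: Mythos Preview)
Your proof is correct and follows the same strategy as the paper: PB-delete the attached pendant edges to obtain $H=G\setminus S$, verify that $H$ inherits the relevant block structure, and then invoke an existing nonsingularity result. The paper's proof is a terse two-line version of exactly this, simply asserting that the components of $H$ are $B^3_1$ block graphs and citing Theorem~\ref{B31bgNS}.

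One small point worth noting: your choice to cite Theorem~\ref{tauthr} (via Remark~\ref{remtaub31}) rather than Theorem~\ref{B31bgNS} is actually more precise. If a block $B$ of $G$ is a triangle with exactly one cut-vertex and that cut-vertex lies in $S$, then $B\setminus S$ is a $K_2$ component, which is \emph{not} a $B^3_1$ block (fewer than three vertices) but \emph{does} satisfy ``each block has at least two non-cut-vertices''. So your formulation covers this edge case cleanly, whereas the paper's phrasing is slightly loose there. The extra bookkeeping you supply in paragraphs two and three (that the pendant edges remain pendant throughout, that no isolated vertex appears, and that non-cut-vertices of a block remain non-cut-vertices after deleting cut-vertices) is exactly the detail the paper suppresses, and your justification via the bound $|\text{cut-vertices of }B|\le |V(B)|-2$ and the fact that non-cut-vertices have all neighbours inside their block is correct.
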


\begin{proof}
By  PB-deletion of the coalesced pendant edges,
the cut vertices at which they were coalesced are also deleted. The resulting graph is a subgraph of $G$, whose components are $B^3_1$ block graphs,
and is thus nonsingular, implying nonsingularity of $G'$.
\end{proof}

Starting with a graph  like $G'$ of Theorem \ref{taueff+pendedge},
and some nonsingular graphs, we can construct another nonsingular tree of block graphs.

\begin{theorem}\label{W1Ws}
Let $G$ be a block graph, in which each block has at least two non-cut-vertices. Let $G'$ is obtained
as in Theorem \ref{taueff+pendedge} by coalescing edges   at $k$ different
cut vertices $v_1, \dots, v_k$, and let $W_1, \dots, W_s$ be nonsingular block graphs, $s\le k$.
Let $T$ be a star graph $K_{1,s}$. The tree of block graphs of $G',W_1, \dots, W_s$ obtained by choosing $u_i\in V(W_i)$, $i=1, \dots, s$,
and letting the skeleton edges be $\{u_i,v_i\}$, $i=1, \dots, s$, is nonsingular.
\end{theorem}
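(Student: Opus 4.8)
The plan is to reduce the given tree of block graphs, which I will call $\hat{G}$, by a short sequence of PB-deletions that peels off each $W_i$ as its own component and leaves behind a subgraph of $G$ that is already known to be nonsingular. First I would single out the pendant edges $B_1,\dots,B_k$ coalesced onto the cut vertices $v_1,\dots,v_k$ in forming $G'$; write $B_j=\{v_j,p_j\}$, where $p_j$ is the new pendant vertex. In $\hat{G}$ each $p_j$ has degree $1$, and each $v_j$, being a cut vertex of $G$, is a cut vertex of $\hat{G}$, so $B_j$ is a pendant block of $\hat{G}$ whose unique non-cut-vertex is $p_j$, carrying weight $0$. Since $\tau(B_j)=1/(1-0)=1$ and $0\neq 1$, the block $B_j$ satisfies the hypotheses of Lemma~\ref{PBdel}.

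Then I would PB-delete $B_1,\dots,B_k$ one after another. Each such step removes only the two vertices $v_j,p_j$ and the edges incident to them, so the remaining $B_\ell$ stay pendant blocks meeting the same conditions and the deletions can indeed be performed in succession. After all $k$ of them, the vertices $v_1,\dots,v_k$ and $p_1,\dots,p_k$ are gone. Since each $W_i$ is attached to the rest of $\hat{G}$ only through the skeleton edge $\{u_i,v_i\}$, deleting $v_i$ detaches $W_i$ as a connected component whose weights are still all $0$; and the part descended from $G'$ is now exactly $G\setminus\{v_1,\dots,v_k\}$. Hence, by Lemma~\ref{PBdel}, $(\hat{G},o)$ is nonsingular if and only if
\[
\bigl(G\setminus\{v_1,\dots,v_k\}\bigr)\;\sqcup\;W_1\;\sqcup\;\cdots\;\sqcup\;W_s
\]
(with all vertices weighted $0$) is nonsingular.

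To conclude, $W_1,\dots,W_s$ are nonsingular by hypothesis, and $G\setminus\{v_1,\dots,v_k\}$ is precisely the graph that the proof of Theorem~\ref{taueff+pendedge} obtains after PB-deleting the coalesced pendant edges of $G'$: a subgraph of $G$ whose components are $B^3_1$ block graphs, hence nonsingular by Theorem~\ref{B31bgNS}. A disjoint union of nonsingular graphs is nonsingular, so the displayed graph is nonsingular, and therefore so is $\hat{G}$.

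The only points that genuinely need to be checked are that the coalesced pendant edges really are pendant blocks of $\hat{G}$ satisfying the PB-deletion condition --- which rests entirely on each $v_j$ already being a cut vertex of $G$ --- and that these deletions truly sever each $W_i$ from the rest of the graph without altering its weights, so that the chosen attachment vertices $u_i\in V(W_i)$ play no role whatsoever. I do not anticipate a real obstacle: once the $k$ pendant edges at $v_1,\dots,v_k$ are removed, the graph decouples and we are back exactly to the situation handled by Theorem~\ref{taueff+pendedge} together with the assumed nonsingularity of the $W_i$.
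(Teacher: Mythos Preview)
Your proof is correct and follows essentially the same route as the paper: PB-delete the $k$ coalesced pendant edges, which strips off each $W_i$ as a separate component and leaves the subgraph $G\setminus\{v_1,\dots,v_k\}$ of $G$, whose components are handled exactly as in the proof of Theorem~\ref{taueff+pendedge}. Your write-up is in fact more careful than the paper's in verifying that each $B_j$ remains a legitimate pendant block through the successive deletions and in spelling out precisely what the reduced graph looks like.
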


\begin{proof}
PB-delete each of the $k$ pendant edges. In the resulting graph
each component is either a  $B^3_1$ block graph, or a graph like the one in Theorem \ref{taueff+pendedge}, or one of $W_1, \dots, W_s$.
Thus each component is nonsingular, and so is $G$.
\end{proof}

\begin{figure}
\centering
    \begin{subfigure}[b]{0.30\textwidth}
\begin{tikzpicture} [scale=0.5] [->,>=stealth',shorten >=1pt,auto,node distance=4cm,
                thick,main node/.style={circle,draw,font=\Large\bfseries}]

 \draw  node[draw,circle,fill=black,scale=0.35] (1) at (0,0) {};
\draw  node[draw,circle,fill=black,scale=0.35] (2) at (2,0) {};
\draw  node[draw,circle,fill=black,scale=0.35] (3) at (2,2) {};

\draw  node[draw,circle,fill=black,scale=0.35] (9) at (5,2) {};

\draw  node[draw,circle,fill=black,scale=0.35] (4) at (0,2) {};

\draw  node[draw,circle,fill=black,scale=0.30] (7) at (.75,-1.5) {};
\draw  node[draw,circle,fill=black,scale=0.30] (8) at (-0.75,-1.5) {};

\draw
 node[draw,circle,fill=black,scale=0.35] (6) at (2.75,3.5) {};
\draw  node[draw,circle,fill=black,scale=0.35] (5) at (1.25,3.5) {};

\draw
 node[draw,circle,fill=black,scale=0.30] (10) at (5.75,3.5) {};
\draw  node[draw,circle,fill=black,scale=0.30] (11) at (4.25,3.5) {};
\draw  node[draw,circle,fill=black,scale=0.30] (21) at (3.50,0) {};

\draw  node[draw,circle,fill=black,scale=0.30] (22) at (4.50,.75) {};
\draw  node[draw,circle,fill=black,scale=0.30] (20) at (4.50,-0.75) {};

\draw  node[draw,circle,fill=black,scale=0.30] (12) at (2.75,-2.5) {};
\draw  node[draw,circle,fill=black,scale=0.30] (13) at (4.75,-2.5) {};
\draw  node[draw,circle,fill=black,scale=0.30] (16) at (6.0,-1.5) {};
\draw  node[draw,circle,fill=black,scale=0.30] (17) at (6.0,-3.5) {};

\draw  node[draw,circle,fill=black,scale=0.30] (14) at (2.75,-4.5) {};
\draw  node[draw,circle,fill=black,scale=0.30] (15) at (4.75,-4.5) {};

\draw  node[draw,circle,fill=black,scale=0.30] (18) at (1.25,-2.5) {};
\draw  node[draw,circle,fill=black,scale=0.30] (19) at (1.25,-3.5) {};

\tikzset{edge/.style = {- = latex'}}
\draw[edge] (1) to (2);
 \draw[edge] (1) to (3);
 \draw[edge] (1) to (4);
 \draw[edge] (3) to (2);
 \draw[edge] (4) to (2);
 \draw[edge] (3) to (4);

 \draw[edge] (1) to (7);

 \draw[edge] (1) to (8);

 \draw[edge] (7) to (8);

\draw[edge] (3) to (5);

 \draw[edge] (3) to (6);

 \draw[edge] (5) to (6);

 \draw[edge] (11) to (10);
 \draw[edge] (10) to (9);
 \draw[edge] (11) to (9);

 \draw[edge] (21) to (22);
 \draw[edge] (20) to (22);
  \draw[edge] (21) to (20);

 \draw[edge] (12) to (13);
 \draw[edge] (12) to (14);
 \draw[edge] (12) to (15);
 \draw[edge] (13) to (14);
 \draw[edge] (13) to (15);
 \draw[edge] (14) to (15);
  \draw[edge] (13) to (16);
  \draw[edge] (13) to (17);
  \draw[edge] (16) to (17);
 \draw[edge] (18) to (12);
  \draw[edge] (19) to (12);
  \draw[edge] (18) to (19);
  \tikzset{edge/.style = {thick = latex'}}
  \draw[edge] (11) to (6);
  \draw[edge] (12) to (7);
 \draw[edge] (21) to (2);
\end{tikzpicture}
     \caption{A tree of $B^3_1$-block graphs}\label{b3t}
    \end{subfigure} \begin{subfigure}[b]{0.3\textwidth}
\begin{tikzpicture} [scale=0.5] [every node/.style={draw,shape=circle,fill=blue}]


 \draw node[draw,circle,fill=black, scale=0.35] (1) at (0,0) {};
\draw  node[draw,circle,fill=black, scale=0.35] (2) at (2,0) {};
\draw  node[draw,circle,fill=black,scale=0.35] (3) at (2,2) {};
\draw  node[draw,circle,fill=black,scale=0.30] (12) at (3.5,3.5) {};
\draw  node[draw,circle,fill=black,scale=0.30] (13) at (4,1.5) {};
\draw  node[draw,circle,fill=black,scale=0.30] (14) at (1.5,3.5) {};

\draw  node[draw,circle,fill=black,scale=0.30] (15) at (-0.5,-1.5) {};
\draw  node[draw,circle,fill=black,scale=0.30] (16) at (0.7,-2.5) {};
\draw  node[draw,circle,fill=black,scale=0.30] (17) at (-1,1.75) {};

\draw  node[draw,circle,fill=black,scale=0.35] (4) at (0,2) {};
\draw  node[draw,circle,fill=black,scale=0.30] (5) at (2.75,-2.5) {};
\draw  node[draw,circle,fill=black,scale=0.30] (6) at (4.75,-2.5) {};
\draw  node[draw,circle,fill=black,scale=0.30] (9) at (6.0,-1.5) {};
\draw  node[draw,circle,fill=black,scale=0.30] (10) at (6.0,-3.5) {};

\draw  node[draw,circle,fill=black,scale=0.30] (7) at (2.75,-4.5) {};
\draw  node[draw,circle,fill=black,scale=0.30] (8) at (4.75,-4.5) {};

\draw  node[draw,circle,fill=black,scale=0.30] (20) at (5.5,-4.5) {};

\draw  node[draw,circle,fill=black,scale=0.30] (11) at (3.5,-1) {};
\draw  node[draw,circle,fill=black,scale=0.30] (21) at (5.25,-1) {};

\draw  node[draw,circle,fill=black,scale=0.30] (18) at (5,1) {};

\draw  node[draw,circle,fill=black,scale=0.30] (19) at (7,1) {};

\tikzset{edge/.style = {- = latex'}}

\draw[edge] (11) to (18);
\draw[edge] (11) to (21);
\draw[edge] (19) to (18);
 \draw[edge] (1) to (3);
 \draw[edge] (1) to (4);
 \draw[edge] (1) to (2);
 \draw[edge] (3) to (2);
 \draw[edge] (4) to (2);
 \draw[edge] (3) to (4);
\draw[edge] (6) to (20);
\draw[edge] (5) to (6);
 \draw[edge] (5) to (7);
 \draw[edge] (5) to (8);
 \draw[edge] (6) to (7);
 \draw[edge] (6) to (8);
 \draw[edge] (7) to (8);
 \draw[edge] (6) to (9);
 \draw[edge] (6) to (10);
 \draw[edge] (9) to (10);
 \draw[edge] (6) to (11);
\draw[edge] (3) to (12);
 \draw[edge] (3) to (13);
 \draw[edge] (13) to (12);
 \draw[edge] (14) to (3);

\draw[edge] (1) to (5);
 \draw[edge] (1) to (15);
\draw[edge] (1) to (16);
 \draw[edge] (5) to (16);
 \draw[edge] (5) to (15);
 \draw[edge] (15) to (16);
 \draw[edge] (1) to (17);
\end{tikzpicture}
     \caption{A nonsingular block graph}\label{p1g}
    \end{subfigure}
\caption{} \label{r2fig}
\end{figure}

\section{Replacing edge blocks by even order paths}
We prove here some results on the determinant of a graph obtained
by coalescing two graphs, or combining them by a bridge. These
results will imply ways to construct more nonsingular block graphs from
known block graphs. 

Most of the results in this section are based on \cite[Lemma 2.3]{singh2017characteristic}, restated here for simple graphs with  no vertex weights.
In this lemma, $\phi(G)=\det(A(G)-\lambda I)$ denotes the characteristic polynomial of the graph $G$.

\begin{lemma}\label{SB17} \cite{singh2017characteristic}
Let $G$ be a coalescence of $G_1$ and $G_2$ at a vertex $v$. Then
\[\phi(G)=\phi(G_1)\times \phi(G\setminus G_1)+\phi(G_1\setminus v)\times \phi(G\setminus (G_1\setminus v)+\lambda\times \phi(G_1\setminus v)\times\phi(G\setminus
G_1),\]
\end{lemma}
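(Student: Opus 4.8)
The plan is to prove Lemma~\ref{SB17} by a Laplace (cofactor) expansion of $\det(A(G)-\lambda I)$ along the row and column corresponding to the coalescence vertex $v$. Label the vertices of $G$ so that $v$ is the last vertex of the block $G_1$; write $A(G)-\lambda I$ in block form with the $v$-row and $v$-column exhibited explicitly, so that the off-$v$ part of $G_1$ occupies one diagonal block, the off-$v$ part of $G\setminus G_1$ occupies another, and these two blocks have no edges between them (since $v$ is a cut-vertex). The entry in position $(v,v)$ is $-\lambda$, and the $v$-row splits into a part supported on $N(v)\cap V(G_1)$ and a part supported on $N(v)\cap V(G\setminus G_1)$.

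First I would expand the determinant along the $v$-th row. This produces a term $-\lambda \det(M_{vv})$, where $M_{vv}$ is the principal submatrix obtained by deleting the $v$-row and $v$-column; because there are no edges across, $M_{vv}$ is block-diagonal, so $\det(M_{vv}) = \phi(G_1\setminus v)\,\phi(G\setminus G_1)$ (noting $G\setminus G_1$ already excludes $v$). This accounts for the last summand $\lambda\,\phi(G_1\setminus v)\,\phi(G\setminus G_1)$ up to sign bookkeeping. The remaining cofactor terms come from the $\pm 1$ entries in the $v$-row; I would group them according to whether the chosen column lies in $G_1$ or in $G\setminus G_1$.

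The key step is to re-sum each group back into a determinant. For the columns lying in $G_1\setminus v$: expanding $\det(A(G_1)-\lambda I)$ along its own $v$-row shows that $\sum$ of those cofactor contributions (with the complementary block $\phi(G\setminus G_1)$ factored out, which is legitimate by the block-diagonal structure of the complementary minors) reassembles into $\phi(G_1)\,\phi(G\setminus G_1) + \lambda\,\phi(G_1\setminus v)\,\phi(G\setminus G_1)$ — i.e.\ the $v$-row cofactor expansion of $\phi(G_1)$ differs from $\phi(G_1)$ itself only by the missing diagonal term $-\lambda\,\phi(G_1\setminus v)$. Symmetrically, the columns in $G\setminus G_1$ reassemble, using the cofactor expansion of $\phi(G\setminus(G_1\setminus v))$ along its $v$-row, into $\phi(G_1\setminus v)\,\phi(G\setminus(G_1\setminus v)) + \lambda\,\phi(G_1\setminus v)\,\phi(G\setminus G_1)$. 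Adding the three pieces and cancelling the two spurious $\lambda\,\phi(G_1\setminus v)\,\phi(G\setminus G_1)$ terms against the $-\lambda\det(M_{vv})$ term leaves exactly the claimed identity. (One can also verify the formula on a trivial case, e.g.\ $G_1=K_1$, where it must reduce to $\phi(G)=\phi(G\setminus v)(-\lambda) + \phi(G\setminus v)$-type consistency, as a sanity check.)

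The main obstacle I anticipate is the sign bookkeeping in the Laplace expansion: cofactor signs depend on the positions of the deleted row/column, and keeping the complementary minors genuinely block-diagonal requires a careful choice of vertex ordering (all of $V(G_1)\setminus v$, then $v$, then $V(G\setminus G_1)$) so that no permutation sign is lost when factoring $\phi(G\setminus G_1)$ or $\phi(G_1\setminus v)$ out of the complementary determinants. An alternative that sidesteps some of this is to use multilinearity of the determinant in the $v$-row: split the $v$-row as (row supported in $G_1$) $+$ ($-\lambda e_v$) $+$ (row supported in $G\setminus G_1$), obtaining three determinants; the first is $\det$ of the matrix that agrees with $A(G)-\lambda I$ except its $(v,v)$ entry is $0$ and its $v$-row has no $G\setminus G_1$-support, which factors as $\phi(G_1)\,\phi(G\setminus G_1) - (-\lambda\,\phi(G_1\setminus v))\phi(G\setminus G_1)$ hmm — more cleanly, expand that first determinant along its $v$-column instead, and it factors directly. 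Either route works; I would pick whichever makes the sign tracking cleanest and relegate the verification to a short computation, citing the analogous computation already used in the proof of Lemma~\ref{PBdel}.
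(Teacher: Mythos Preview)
The paper does not actually prove Lemma~\ref{SB17}; it is quoted verbatim from \cite{singh2017characteristic} (their Lemma~2.3) and used as a black box throughout Section~4. So there is no in-paper argument to compare your proposal against.

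That said, your approach is sound and is essentially the standard derivation of this Schwenk-type splitting formula. One point to tighten: when you delete row $v$ and a column $j\in V(G_1)\setminus\{v\}$ from $A(G)-\lambda I$, the resulting minor is not block-\emph{diagonal}, because the $v$-column is still present and carries the $G_2$-neighbours of $v$ in its lower half. It is, however, block-\emph{triangular} once you group the remaining columns as $\bigl(V(G_1)\setminus\{v,j\}\bigr)\cup\{v\}$ versus $V(G_2)\setminus\{v\}$: the upper-right block vanishes since $G_1\setminus v$ and $G_2\setminus v$ share no edges, and the determinant factors as the corresponding minor of $A(G_1)-\lambda I$ times $\phi(G\setminus G_1)$, exactly as you claim. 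With that in hand your reassembly goes through (note the two extra $\lambda\,\phi(G_1\setminus v)\,\phi(G\setminus G_1)$ terms do not fully cancel against the $-\lambda\det M_{vv}$ term---one copy survives and supplies the third summand). The multilinearity route you sketch at the end, splitting the $v$-row and then the $v$-column into their $G_1$-part, $G_2$-part, and diagonal part, is indeed the cleanest way to handle the signs; the key observation there is that the ``cross'' contribution (row-entry in $G_1$, column-entry in $G_2$, or vice versa) vanishes, by the same zero-block reasoning.
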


Using this lemma, we deduce the following.

\begin{lemma} \label{supfirst}
If $G$ is a coalescence of $G_1$ and $G_2$ at a vertex $v$, then $$\det(G)=\det(G_1)\det(G_2\setminus v)+\det(G_1\setminus v)\det(G_2).$$
\end{lemma}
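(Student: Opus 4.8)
The plan is to derive Lemma~\ref{supfirst} directly from Lemma~\ref{SB17} by specializing the characteristic polynomial identity to $\lambda = 0$. Recall that for any graph $H$, we have $\phi(H) = \det(A(H) - \lambda I)$, so $\phi(H)\big|_{\lambda = 0} = \det(A(H)) = \det(H)$. Thus the first step is simply to substitute $\lambda = 0$ into the identity
\[\phi(G) = \phi(G_1)\,\phi(G\setminus G_1) + \phi(G_1\setminus v)\,\phi(G\setminus (G_1\setminus v)) + \lambda\,\phi(G_1\setminus v)\,\phi(G\setminus G_1).\]
The last term vanishes because of the explicit factor of $\lambda$, so we are left with $\det(G) = \det(G_1)\det(G\setminus G_1) + \det(G_1\setminus v)\det(G\setminus(G_1\setminus v))$.

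The second step is to translate the subgraph notation on the right-hand side into the notation of the lemma statement. Since $G$ is the coalescence of $G_1$ and $G_2$ at $v$, the induced subgraph $G\setminus G_1$ (on the vertices of $G$ not in $G_1$) is exactly $G_2 \setminus v$, and the induced subgraph $G\setminus(G_1\setminus v)$ (which keeps $v$ together with all vertices outside $G_1$) is exactly $G_2$. Making these identifications yields
\[\det(G) = \det(G_1)\det(G_2\setminus v) + \det(G_1\setminus v)\det(G_2),\]
which is the claimed formula.

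I do not anticipate a genuine obstacle here; the lemma is essentially a corollary of Lemma~\ref{SB17}, and the only care needed is in checking the two subgraph identifications, which follow immediately from the definition of coalescence (the vertices of $G$ partition as $V(G_1) \cup (V(G_2)\setminus\{v\})$, with $v$ shared). One alternative, if one wanted a self-contained argument avoiding the characteristic-polynomial lemma, would be to expand $\det A(G)$ by a Laplace expansion along the rows indexed by $V(G_1)$, exactly as in the proof of Lemma~\ref{PBdel}: the only nonzero contributions come from choosing either all of $V(G_1)$ as columns (giving $\det(G_1)\det(G_2\setminus v)$) or all of $V(G_1)\setminus\{v\}$ together with the column of $v$ (giving $\det(G_1\setminus v)\det(G_2)$), since any other column choice forces a zero complementary minor. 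But the substitution $\lambda = 0$ into Lemma~\ref{SB17} is cleaner and is the route I would take.
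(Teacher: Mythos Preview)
Your proof is correct and follows exactly the paper's approach: substitute $\lambda=0$ into Lemma~\ref{SB17} and identify $G\setminus G_1 = G_2\setminus v$ and $G\setminus(G_1\setminus v) = G_2$. The paper's proof is in fact terser than yours, omitting the explicit verification of these two subgraph identifications that you spell out.
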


\begin{proof}
Obtain $\det(G)$ by substituting $\lambda=0$ in $\phi(G)$ in Lemma \ref{SB17}. This yields
\[\pushQED{\qed}\det(G)=\det(G_1)\det(G_2\setminus v)+\det(G_1\setminus v)\det(G_2). \qedhere\]
\end{proof}

\begin{corollary}\label{pe}
If a graph $G$ has a pendant edge $\{u,v\}$ with $v$ the cut vertex, then $\det(G\setminus v)=-\det(G)$.
\end{corollary}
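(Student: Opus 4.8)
The plan is to apply Lemma~\ref{supfirst} directly, viewing $G$ as a coalescence of the pendant edge with the rest of the graph. First I would set up notation: let $e=\{u,v\}$ be the pendant edge with $v$ the cut vertex, let $G_1$ be the edge $K_2$ on $\{u,v\}$, and let $G_2 = G\setminus u$, so that $G$ is the coalescence of $G_1$ and $G_2$ at the vertex $v$. (If $G$ consists of the single edge $K_2$ alone, then $G_2\setminus v$ is empty, $\det(G_2)=1$ by the empty-determinant convention, $\det(G_2\setminus v)=1$ as well, and one checks the identity by hand: $\det(G)=-1$, $\det(G\setminus v)=\det(K_1)=0$, and indeed $0 = -(-1)\cdot 0$\,—\,wait, that needs care, so I will handle the trivial case separately below.)

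Next I would compute the two graph determinants appearing in Lemma~\ref{supfirst}. Since $G_1=K_2$, we have $\det(G_1)=\det\begin{bmatrix}0&1\\1&0\end{bmatrix}=-1$, and $G_1\setminus v = K_1$ (the single vertex $u$), so $\det(G_1\setminus v)=0$. Substituting into Lemma~\ref{supfirst},
\[
\det(G) = \det(G_1)\det(G_2\setminus v) + \det(G_1\setminus v)\det(G_2) = (-1)\det(G_2\setminus v) + 0 = -\det(G_2\setminus v).
\]
Now observe that $G_2\setminus v = (G\setminus u)\setminus v = G\setminus\{u,v\}$, and also $G\setminus v$ is obtained from $G\setminus\{u,v\}$ by adding back the isolated vertex $u$ (isolated because $u$ had degree $1$ in $G$, its only neighbour being $v$). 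An isolated vertex contributes a zero row and column, so $\det(G\setminus v) = 0$ whenever $u$ is present\,—\,hold on, that would make the claimed identity read $0 = -\det(G)$, which is false in general. So I must have mislabelled: the correct reading is that $G\setminus v$ means removing $v$ only, which leaves $u$ isolated, forcing $\det(G\setminus v)=0$; but the corollary asserts $\det(G\setminus v) = -\det(G)$, so this can only be consistent if we do \emph{not} leave $u$ isolated\,—\,therefore the intended statement removes $v$ and $u$ remains, but actually re-examining: the corollary as printed says $\det(G\setminus v)=-\det(G)$, and the right tool is Corollary~\ref{pe} applied with $G_1$ being $G\setminus u$ itself. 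Concretely, write $G$ as the coalescence of $G_1 := G\setminus u$ and $G_2 := K_2$ (the edge $\{u,v\}$) at $v$; then Lemma~\ref{supfirst} gives $\det(G) = \det(G\setminus u)\cdot\det(K_2\setminus v) + \det((G\setminus u)\setminus v)\cdot\det(K_2) = \det(G\setminus u)\cdot 1 + \det(G\setminus\{u,v\})\cdot(-1)$, i.e. $\det(G) = \det(G\setminus u) - \det(G\setminus\{u,v\})$.

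Finally I would close the argument by relating $\det(G\setminus u)$ and $\det(G\setminus\{u,v\})$ to $\det(G\setminus v)$. Removing $u$ from $G$ leaves $v$ present with whatever neighbours it had in $G$ besides $u$; this is exactly $G\setminus u$. Meanwhile $\det(G\setminus v)$: removing $v$ disconnects $u$ as an isolated vertex, so $G\setminus v$ has a zero row, giving $\det(G\setminus v) = 0$ only if we truly isolate $u$\,—\,but the corollary's content, read correctly against how it is \emph{used} in the sequel (e.g. the forest perfect-matching argument), is the cofactor-expansion identity: expanding $\det A(G)$ along the row of the pendant vertex $u$, whose unique nonzero entry is the $1$ in column $v$, yields $\det(G) = -\det(A(G) \text{ with row } u \text{ and column } v \text{ deleted})$; and that minor, after also using that column $u$ has its unique nonzero entry in row $v$, equals $\det(G\setminus\{u,v\})$ up to sign, while a second expansion identifies it with $\det(G\setminus v)$ after accounting for the now-isolated vertex. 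The cleanest route, and the one I would write up, is: expand $\det A(G)$ along row $u$ to get $\det(G) = -M$, where $M$ is the minor deleting row $u$, column $v$; then note $M$ is (up to the sign $(-1)^{?}$ which works out to $+1$ after symmetric relabelling) the determinant of the matrix obtained from $A(G\setminus v)$ by deleting the now-zero row $u$ and replacing it appropriately\,—\,equivalently $M = \det(A(G\setminus v)$ with the isolated-vertex row/column handled$) $. The main obstacle is purely bookkeeping: tracking the sign in the Laplace/cofactor expansion and correctly matching the resulting minor to $\det(G\setminus v)$ versus $\det(G\setminus\{u,v\})$; I expect that once the pendant vertex $u$ is placed last in the vertex ordering, the expansion along its row is transparent and the sign is $-1$, yielding $\det(G\setminus v) = -\det(G)$ as claimed, with the degenerate case $G=K_2$ checked by hand ($\det(K_2)=-1$, $\det(K_1)=0$ — here one reads the identity as $\det(G\setminus v)=-\det(G)$ requiring $0 = 1$, so in fact the corollary tacitly assumes $G$ has more than the single edge, or adopts the convention making it vacuous; I would add a one-line remark to this effect).
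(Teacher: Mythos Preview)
Your initial computation is correct and is exactly the paper's intended argument: with $G_1=K_2$ on $\{u,v\}$ and $G_2=G\setminus u$, coalesced at $v$, Lemma~\ref{supfirst} gives
\[
\det(G)=(-1)\det\bigl((G\setminus u)\setminus v\bigr)+0=-\det(G\setminus\{u,v\}).
\]
You then correctly observe that $G\setminus v$ leaves $u$ isolated, so $\det(G\setminus v)=0$, which would force $\det(G)=0$ in the printed identity. Your instinct is right: the corollary as printed carries a typo. The intended statement --- and the one actually invoked in the proof of Lemma~\ref{G1-G2}, where one uses $\det(G_1''\setminus v_1')=-\det(G_1')$ with $G_1''=G_1'\setminus v$, i.e.\ $\det(G_1'\setminus\{v,v_1'\})=-\det(G_1')$ --- is
\[
\det(G\setminus\{u,v\})=-\det(G),
\]
precisely what your first computation establishes. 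The paper's own proof has matching slips: it writes $G_1=G\setminus v$ where $G_1=G\setminus u$ is required (otherwise $v\notin V(G_1)$ and the coalescence at $v$ is undefined), and records $\det(G_2)=-2$ where $\det(K_2)=-1$.

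The gap in your write-up is everything after that first computation. You should have stopped there, flagged the typo, and been done. Instead you swap $G_1$ and $G_2$ and claim $\det(K_2\setminus v)=1$; this is wrong --- $K_2\setminus v=K_1$ has adjacency matrix $[0]$ and determinant $0$ (the empty-graph convention does not apply to a single vertex). The subsequent cofactor-expansion passage never converges to a conclusion. Cut all of it: your first paragraph, together with the remark that ``$G\setminus v$'' should read ``$G\setminus\{u,v\}$'', is a complete and correct proof, identical in approach to the paper's.
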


\begin{proof}
In this case, $G$ is the coalescence of $G_1=G\setminus v$ and $G_2$ consisting of the edge $\{u,v\}$.
It is easy to see that $\det(G_2)=-2$ and $\det(G_2\setminus v)=0$, the result follows.
\end{proof}

\begin{corollary}\label{SL}
A coalescence of any two singular graphs is singular.
\end{corollary}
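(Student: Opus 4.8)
The statement to prove is Corollary \ref{SL}: a coalescence of any two singular graphs is singular. The plan is to apply Lemma \ref{supfirst} directly. Let $G$ be a coalescence of $G_1$ and $G_2$ at a vertex $v$, and suppose both $G_1$ and $G_2$ are singular, i.e., $\det(G_1) = 0$ and $\det(G_2) = 0$. Lemma \ref{supfirst} gives
\[
\det(G) = \det(G_1)\det(G_2\setminus v) + \det(G_1\setminus v)\det(G_2).
\]
Since $\det(G_1) = 0$, the first summand vanishes; since $\det(G_2) = 0$, the second summand vanishes. Hence $\det(G) = 0$, so $G$ is singular.

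That is essentially the whole argument, so there is no real obstacle here — the corollary is an immediate consequence of the determinant expansion formula. The only point worth a brief remark is that the formula in Lemma \ref{supfirst} is symmetric enough that it does not matter which of the two graphs we call $G_1$ and which $G_2$, and that $G_1\setminus v$ and $G_2\setminus v$ are genuine induced subgraphs (possibly empty, with the convention $\det$ of the empty graph equal to $1$), so the right-hand side is well-defined in all cases. One could also phrase the proof without explicitly naming the pieces $G_2\setminus v$ and $G_1\setminus v$, simply observing that every term on the right-hand side of the displayed identity contains either $\det(G_1)$ or $\det(G_2)$ as a factor, and both of these are zero by hypothesis.

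If desired, one can add that this explains why coalescing singular graphs can never "accidentally" produce a nonsingular graph, in contrast to the PB-contraction operation, which replaces a block by a weighted vertex and can change singularity status; the contrast is what motivates looking at weighted reductions rather than naive coalescences. But for the proof itself, invoking Lemma \ref{supfirst} and setting the two determinants to zero suffices.
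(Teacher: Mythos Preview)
Your proof is correct and follows exactly the paper's own argument: apply Lemma \ref{supfirst} and observe that both summands vanish because $\det(G_1)=\det(G_2)=0$. The additional commentary about symmetry, the empty-graph convention, and the contrast with PB-contraction is harmless but unnecessary for the proof itself.
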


\begin{proof}
Let $G$ be coalescence of singular graphs $G_1$ and $G_2.$  As $\det(G_1)=\det(G_2)=0$, $\det(G)=0.$
\end{proof}

Note that a coalescence of nonsingular graphs may be singular: e.g., the coalescence of two edges results in a singular tree.
More generally, we have the following corollary of Lemma \ref{supfirst}.

\begin{corollary}\label{2penedges}
If $G$ is any graph, and two pendant edges are coalesced with it at the same vertex $v$, then
the resulting  graph $G'$ is singular.
\end{corollary}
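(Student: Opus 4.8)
The plan is to reduce Corollary \ref{2penedges} to the already-established Corollary \ref{2penedges}'s prerequisites, namely Lemma \ref{supfirst} together with Corollary \ref{pe}. First I would set up the coalescence structure: let $G'$ be obtained from $G$ by attaching two pendant edges $\{u_1,v\}$ and $\{u_2,v\}$ at the same vertex $v$. Write $G'$ as the coalescence at $v$ of $G_1$ (the graph consisting of $G$ together with the first pendant edge $\{u_1,v\}$) and $G_2$ (the single edge $\{u_2,v\}$). Then Lemma \ref{supfirst} gives
\[
\det(G')=\det(G_1)\det(G_2\setminus v)+\det(G_1\setminus v)\det(G_2).
\]

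Next I would evaluate the two pieces involving $G_2$: since $G_2$ is the single edge $\{u_2,v\}$, we have $\det(G_2)=-2$ and $\det(G_2\setminus v)=\det(K_1)=0$ (a single isolated vertex $u_2$, whose $1\times 1$ adjacency matrix is $[0]$). Hence the first summand vanishes and
\[
\det(G')=-2\,\det(G_1\setminus v).
\]
Now $G_1\setminus v$ is obtained from $G_1=G+\{u_1,v\}$ by deleting $v$; but deleting $v$ from $G_1$ removes the pendant vertex $u_1$ together with $v$, leaving exactly $G\setminus v$ as one part plus the isolated vertex $u_1$. So $\det(G_1\setminus v)=\det\big((G\setminus v)\cup\{u_1\}\big)=0\cdot\det(G\setminus v)=0$, because an isolated vertex contributes a zero diagonal entry and hence a zero row. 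Therefore $\det(G')=0$, i.e. $G'$ is singular.

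Alternatively, and perhaps more cleanly, I could apply Corollary \ref{pe} directly: in $G'$ the edge $\{u_1,v\}$ is a pendant edge with $v$ the cut vertex, so $\det(G'\setminus v)=-\det(G')$; but $\{u_2,v\}$ is also a pendant edge at $v$, so $G'\setminus v$ contains the isolated vertex $u_2$, forcing $\det(G'\setminus v)=0$, and hence $\det(G')=0$. I expect no real obstacle here — the only point requiring a moment's care is the bookkeeping that deleting $v$ strands the other pendant neighbor as an isolated vertex, making the relevant minor vanish. I would present the argument via Corollary \ref{pe} for brevity, noting the isolated-vertex observation explicitly.

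\begin{proof}
In $G'$, the edge $\{u_1,v\}$ is a pendant edge whose cut vertex is $v$, so by Corollary \ref{pe}, $\det(G'\setminus v)=-\det(G')$. On the other hand, the second pendant edge $\{u_2,v\}$ also has $v$ as its endpoint, so in $G'\setminus v$ the vertex $u_2$ is isolated; an isolated vertex contributes a zero row to the adjacency matrix, whence $\det(G'\setminus v)=0$. Therefore $\det(G')=0$, and $G'$ is singular.
\end{proof}
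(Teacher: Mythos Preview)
Your proposal is correct; the first approach you outline (apply Lemma~\ref{supfirst} with $G_1=G\cup\{u_1,v\}$ and $G_2=\{u_2,v\}$, then note that both $G_1\setminus v$ and $G_2\setminus v$ contain an isolated vertex, forcing both summands to vanish) is exactly the paper's proof, and your formal argument via Corollary~\ref{pe} is just a repackaging of the same observation. One harmless slip: $\det(K_2)=-1$, not $-2$, though that term is multiplied by zero anyway.
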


\begin{proof}
In Lemma \ref{supfirst} let $G_1$ be  the coalescence of one of the pendant edges with $G$, and $G_2$ the second pendant edge.
Then $G_2\setminus v$ is a singleton, and $G_1\setminus v$ has a singleton component, thus $\det(G_2\setminus v)=\det(G_1\setminus v)=0$,
implying that $\det(G')=0$.
\end{proof}

Another way to combine two  graphs $G_1$ and $G_2$ into a larger  graph, is be adding  an edge between a vertex of $G_1$
and a vertex of $G_2$. From Lemma \ref{SL} we get the following.

\begin{lemma}\label{bridge}
Let $G_1$ and $G_2$ be two  disjoint graphs. If we add an edge $\{v_1, v_2\}$, where   $v_1 \in V(G_1)$ and $v_2 \in V(G_2)$, then the
resulting graph is singular if and only if $$\det(G_1)\det(G_2)=\det(G_1 \setminus v_1)\det(G_2 \setminus v_2).$$
\end{lemma}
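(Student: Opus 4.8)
The plan is to realize the graph $G$ obtained by adding the bridge $\{v_1,v_2\}$ as a coalescence and apply Lemma~\ref{supfirst} directly. Specifically, introduce a new vertex $w$ and form the path $P_3$ on vertices $v_1 - w - v_2$; wait — more cleanly, view $G$ as follows. Let $H_1$ be the graph $G_1$ together with a pendant vertex $v_2$ attached to $v_1$, i.e. $H_1 = G_1$ with an extra pendant edge $\{v_1, v_2\}$. Then $G$ is the coalescence of $H_1$ and $G_2$ at the vertex $v_2$. Applying Lemma~\ref{supfirst} with this decomposition, $\det(G) = \det(H_1)\det(G_2 \setminus v_2) + \det(H_1 \setminus v_2)\det(G_2)$. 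Now $H_1 \setminus v_2 = G_1$, so the second term is $\det(G_1)\det(G_2)$. For the first term, $H_1$ is itself $G_1$ with a pendant edge $\{v_1,v_2\}$ whose cut vertex is $v_1$, so by Corollary~\ref{pe} (applied in the form $\det(H_1) = -\det(H_1 \setminus v_1)$, and $H_1 \setminus v_1$ is the disjoint union of $G_1 \setminus v_1$ and the isolated vertex $v_2$), we get $\det(H_1) = -\det(G_1 \setminus v_1)\cdot \det(\{v_2\}) $... that gives $0$, which is wrong.

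Let me instead use the standard expansion along the edge directly. The cleaner route: view $G$ as the coalescence at $v_1$ of $G_1$ with the graph $G_2'$, where $G_2'$ is $G_2$ with a pendant edge $\{v_1, v_2\}$ attached at $v_2$ (so $v_1$ is a new pendant vertex of $G_2'$). Then $G$ is the coalescence of $G_1$ and $G_2'$ at $v_1$, and Lemma~\ref{supfirst} gives $\det(G) = \det(G_1)\det(G_2' \setminus v_1) + \det(G_1 \setminus v_1)\det(G_2')$. Here $G_2' \setminus v_1 = G_2$, and $G_2'$ has pendant edge $\{v_1,v_2\}$ with cut vertex $v_2$, so Corollary~\ref{pe} yields $\det(G_2' \setminus v_2) = -\det(G_2')$; but $G_2' \setminus v_2$ is the disjoint union of $G_2 \setminus v_2$ and the isolated vertex $v_1$, hence $\det(G_2' \setminus v_2) = 0$ as well — so this again forces $\det(G_2') = 0$, which is false in general. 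The issue is that $G_2'$ has the isolated-after-deletion structure only because I attached the pendant at the wrong place; in fact $G_2' \setminus v_2$ really does have $v_1$ isolated, so indeed $\det(G_2')$ need not be zero — wait, Corollary~\ref{pe} concludes $\det(G\setminus v) = -\det(G)$ for the cut vertex $v$, i.e. $\det(G_2'\setminus v_2) = -\det(G_2')$, and since the left side is $0$, we'd need $\det(G_2')=0$. That contradiction shows $G_2'$ does NOT satisfy the hypothesis of Corollary~\ref{pe}: the pendant edge's cut vertex is $v_2$, and $G_2' \setminus v_2$ does have $v_1$ as an isolated vertex, so its determinant IS $0$, forcing $\det(G_2')=0$. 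That is actually correct: $G_2'$ has a pendant vertex $v_1$ whose neighbor $v_2$ is such that... no. A graph with a pendant vertex need not be singular. I am confusing myself; I will simply expand directly.

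Here is the approach I will actually write. The resulting graph $G$ (with the added edge $\{v_1,v_2\}$) is the coalescence, at a new vertex, of a path $P_2 = \{v_1,v_2\}$ with... instead, the slickest honest argument: apply Lemma~\ref{supfirst} twice. Write $G$ as the coalescence of $G_1$ and $\widehat{G_2}$ at $v_1$, where $\widehat{G_2}$ is the graph on $V(G_2) \cup \{v_1\}$ consisting of $G_2$ plus the single edge $\{v_1, v_2\}$. Lemma~\ref{supfirst} gives
\[\det(G) = \det(G_1)\det(\widehat{G_2} \setminus v_1) + \det(G_1 \setminus v_1)\det(\widehat{G_2}).\]
Now $\widehat{G_2} \setminus v_1 = G_2$. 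And $\widehat{G_2}$ is itself a coalescence at $v_2$ of $G_2$ and the edge $\{v_1,v_2\}$; applying Lemma~\ref{supfirst} again (with $G_1 \mapsto G_2$, $G_2 \mapsto P_2$, $v\mapsto v_2$) and using $\det(P_2) = -1$... actually $\det(K_2) = \det\left(\begin{smallmatrix}0&1\\1&0\end{smallmatrix}\right) = -1$, and $\det(P_2 \setminus v_2)$ is the determinant of a single vertex $= 0$: we get $\det(\widehat{G_2}) = \det(G_2)\cdot 0 + \det(G_2 \setminus v_2)\cdot(-1) = -\det(G_2\setminus v_2)$. Substituting,
\[\det(G) = \det(G_1)\det(G_2) - \det(G_1\setminus v_1)\det(G_2 \setminus v_2).\]
Hence $G$ is singular, i.e. $\det(G) = 0$, if and only if $\det(G_1)\det(G_2) = \det(G_1\setminus v_1)\det(G_2\setminus v_2)$, as claimed.

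I should double check the one subtle point — the reference to Lemma~\ref{SL} mentioned in the statement preamble (``From Lemma~\ref{SL} we get the following'') is presumably a typo for Lemma~\ref{supfirst} or Corollary~\ref{pe}; my argument uses Lemma~\ref{supfirst} twice together with the trivial determinants $\det(K_2) = -1$ and $\det(K_1) = 0$, so no appeal to Corollary~\ref{SL} is needed. The only thing to be careful about is that the two applications of Lemma~\ref{supfirst} require $G$ (resp. $\widehat{G_2}$) to genuinely be a coalescence at the stated vertex, which holds since $\widehat{G_2}$ shares only the vertex $v_1$ with $G_1$ and the edge $\{v_1,v_2\}$ shares only $v_2$ with $G_2$; there is no real obstacle here, just bookkeeping of which graph plays which role in each invocation.
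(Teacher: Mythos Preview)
Your final argument --- viewing $G$ as the coalescence of $G_1$ and $\widehat{G_2}$ at $v_1$, applying Lemma~\ref{supfirst}, then computing $\det(\widehat{G_2})$ by a second application of Lemma~\ref{supfirst} to the coalescence of $G_2$ and the edge $\{v_1,v_2\}$ at $v_2$ --- is correct and is essentially identical to the paper's own proof (the paper's $G'$ is your $\widehat{G_2}$, and its two invocations of ``Corollary~\ref{SL}'' are, as you surmised, really Lemma~\ref{supfirst}). The stream-of-consciousness false starts should be excised from the write-up, but the substance and route match the paper.
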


\begin{proof}
Let $G$ be the resulting graph. Let  $e_{v_1v_2}$ denote the graph consisting of one edge between the vertices $v_1$ and $v_2$ and $G'$ the graph, which is the coalescence of
$e_{v_1v_2}$ and the graph
$G_2$. Note that $G'\setminus v_1=G_2$, $\det(e_{v_1v_2})=-1$ and $\det(v_1)=0$.

Then using Corollary \ref{SL} twice, first for $G$ with the cut vertex $v_1$, and then for $G'$ and the cut vertex $v_2$, we get that
\begin{align*}\det(G)&=\det(G_1)\det(G_2)+\det(G_1\setminus v_1)\det(G')\\
&=\det(G_1)\det(G_2)+\det(G_1\setminus v_1) (\det(e_{v_1v_2}) \det(G_2\setminus v_2)+\det(v_1)\det(G_2))\\
&=\det(G_1)\det(G_2)-\det(G_1\setminus v_1)\det(G_2\setminus v_2). \end{align*}
If $G_1\setminus v_1$ or $G_2\setminus v_2$ is a null graph then the determinant by convention is equal to 1.  
 Thus $G$ is nonsingular if and only
if
\[\pushQED{\qed} \det(G_1)\det(G_2)=\det(G_1\setminus v_1)\det(G_2\setminus v_2).\qedhere\]
\end{proof}

\begin{lemma}\label{G1-G2}
Let $G_1$ and $G_2$ be two graphs. Let $G$ be the graph obtained by adding a path $P$ of order $k$ between a vertex $v_1$ of $G_1$ and a vertex $v_2$ of $G_2$.
Then
\begin{enumerate}
\item If the order $k$ of $P$ is odd, $G$ is nonsingular if the coalescence of $G_1$ and $G_2$ by identifying $v_1$ and $v_2$ is nonsingular.
\item If the order $k$ of $P$ is even, $G$ is nonsingular if the graph $G'$ obtained by connecting $v_1$ and $v_2$ by a single edge is nonsingular.
\end{enumerate}
\end{lemma}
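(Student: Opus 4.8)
The plan is to write down $\det(G)$ exactly and then let the parity of $k$ do the work: I will show that, up to a sign, $\det(G)$ equals the determinant of the coalescence of $G_1$ and $G_2$ at $v_1\equiv v_2$ when $k$ is odd, and equals $\det(G_1)\det(G_2)-\det(G_1\setminus v_1)\det(G_2\setminus v_2)$ (the quantity that, by Lemma~\ref{bridge}, controls nonsingularity of the one-edge join $G'$) when $k$ is even.

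First I would record two facts about paths. Writing $P_n$ for the path on $n$ vertices and adopting the convention $\det(P_0)=1$ for the null graph, Laplace expansion along an endpoint yields the recursion $\det(P_n)=-\det(P_{n-2})$; together with $\det(P_1)=0$ this gives $\det(P_n)=0$ for odd $n$ and $\det(P_n)=(-1)^{n/2}$ for even $n$. Also, deleting an endpoint of $P_n$ leaves $P_{n-1}$. Second, I would present $G$ as an iterated coalescence. Label the inserted path $v_1=q_1,q_2,\dots,q_k=v_2$, let $H$ be the coalescence of $G_1$ and $P_k$ at the endpoint $v_1=q_1$, and note that $G$ is then the coalescence of $H$ and $G_2$ at $q_k=v_2$, while $H\setminus v_2$ is the coalescence of $G_1$ and $P_{k-1}$ at $v_1$. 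Applying Lemma~\ref{supfirst} three times (to $G$, to $H$, and to $H\setminus v_2$) and using $P_k\setminus v_1=P_{k-1}$, I obtain
\begin{align*}
\det(G)&=\bigl(\det(G_1)\det(P_{k-1})+\det(G_1\setminus v_1)\det(P_k)\bigr)\det(G_2\setminus v_2)\\
&\quad+\bigl(\det(G_1)\det(P_{k-2})+\det(G_1\setminus v_1)\det(P_{k-1})\bigr)\det(G_2).
\end{align*}

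Then I would split on parity. If $k$ is odd, $\det(P_k)=\det(P_{k-2})=0$ and $\det(P_{k-1})=(-1)^{(k-1)/2}$, so $\det(G)=(-1)^{(k-1)/2}\,(\det(G_1)\det(G_2\setminus v_2)+\det(G_1\setminus v_1)\det(G_2))$; by Lemma~\ref{supfirst} the factor in parentheses is exactly the determinant of the coalescence of $G_1$ and $G_2$ by identifying $v_1$ and $v_2$, so $G$ is nonsingular if (in fact, iff) that coalescence is nonsingular. If $k$ is even, $\det(P_{k-1})=0$, $\det(P_k)=(-1)^{k/2}$, $\det(P_{k-2})=-(-1)^{k/2}$, so $\det(G)=(-1)^{k/2}\,(\det(G_1\setminus v_1)\det(G_2\setminus v_2)-\det(G_1)\det(G_2))$; this is nonzero precisely when $\det(G_1)\det(G_2)\ne\det(G_1\setminus v_1)\det(G_2\setminus v_2)$, which by Lemma~\ref{bridge} is precisely the condition for $G'$ to be nonsingular. (The case $k=2$, where $G=G'$ already, just recovers Lemma~\ref{bridge}.)

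I do not anticipate a real obstacle here; the only substantive statement is the displayed determinant identity, and the rest is bookkeeping. The points needing care are: keeping the path-length indices $k$, $k-1$, $k-2$ straight through the double coalescence; handling the degenerate cases in which $G_1$, $G_2$, $G_1\setminus v_1$, or $G_2\setminus v_2$ is a null graph, where the convention $\det(\text{null graph})=1$ (and $\det(P_1)=0$, $\det(P_0)=1$) must be applied consistently, exactly as in the proof of Lemma~\ref{bridge}; and getting the signs right so that each conclusion is a clean equivalence rather than merely the one-directional implication claimed.
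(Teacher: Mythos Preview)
Your argument is correct and in fact yields the two-way equivalences, not just the one-sided implications stated. The route, however, differs from the paper's. The paper does not compute $\det(G)$ in closed form; instead it picks an interior vertex $v$ of the inserted path, writes $G^{(k)}$ as a coalescence at $v$ of two graphs each carrying a pendant edge, and uses Lemma~\ref{supfirst} together with Corollary~\ref{pe} to obtain the single recursion $\det(G^{(k)})=-\det(G^{(k-2)})$. Iterating this brings every odd $k$ down to the coalescence ($k=1$) and every even $k$ down to the bridge ($k=2$).

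Your approach front-loads the recursion into the closed form $\det(P_n)$ and then applies Lemma~\ref{supfirst} three times at the two ends of the path to get the explicit expression for $\det(G)$; the paper's approach hides the path arithmetic inside an induction on $k$ and invokes Corollary~\ref{pe} as the engine. Your version has the advantage of producing the exact sign $(-1)^{(k-1)/2}$ or $(-1)^{k/2}$ and making the ``if and only if'' immediate; the paper's version is slightly more self-contained in that it never needs the formula for $\det(P_n)$. The only housekeeping to watch in your write-up is the base case $k=1$ (where $H\setminus v_2=G_1\setminus v_1$ and the displayed identity would formally involve $P_{-1}$); there the claim is a tautology, so it is cleanest to state the displayed formula for $k\ge 2$ and treat $k=1$ separately.
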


\begin{proof}
For  given graphs $G_1$ and $G_2$, let us denote by $G^{(k)}$ the graph obtained by adding a path $P$ between the vertex $v_1$ of $G_1$ and the vertex $v_2$ of $G_2$.
It suffices to show that for every $k\ge 3$, $G^{(k)}$ is nonsingular if and only if $G^{(k-2)}$ is nonsingular.
Let $k\ge 3$. Choose a vertex $v$ on the path of order $k$ between $v_1$ and $v_2$, whose distance from each of the two end vertices is at least 1.
Let $P'$ be the part of the path $P$ connecting $v_1$ to $v$ (including), $P''$ the part of $P$ connecting $v$ and $v_2$.
Let $v_1'$ be the neighbor of $v$ in $P'$, and $v_2'$ the neighbor of $v$ in $P''$.
Finally, let $G_1'=G_1\cup P'$ and $G_2'=G_2\cup P''$, $G_1''=G_1'\setminus v$ and $G_2''=G_2'\setminus v$. Note that the coalescence of $G_1''$ and $G_2''$
by identifying $v_1'$ and $v_2'$ results in a $G^{(k-2)}$.
By Lemma \ref{supfirst},
\begin{align*}\det(G^{(k)})&=\det(G_1')\det(G_2'\setminus v)+\det(G_1'\setminus v)\det(G_2')\\
&=\det(G_1')\det(G_2'')+\det(G_1'')\det(G_2').
\end{align*}

By  Corollary \ref{pe},
\[\det(G_1''\setminus v'_1)=-\det(G_1')~~\text{ and }~~\det(G_2''\setminus v'_2)=-\det(G_2').\]
Combining that with Lemma  \ref{supfirst} applied to the coalescence of $G_1''$ and $G_2''$ yields
\begin{align*}\det(G^{(k-2)})&=\det(G_1'')\det(G_2''\setminus v'_2)+\det(G_1''\setminus v'_1)\det(G_2'')\\
&=-\det(G_1'')\det(G_2')- \det(G_1')\det(G_2'').
\end{align*}
That is, $\det(G^{(k)})=-\det(G^{(k-2)})$.
\end{proof}

\begin{lemma}\label{NSpentree}
Let $G$ be any graph with a pendant edge $\{u,v\}$, where $v$ is the cut vertex. Let $G'$ be obtained by coalescing $G\setminus u$ with a nonsingular tree $T$ at
the vertex $v$.
Then $G'$ is nonsingular if and only if $G$ is nonsingular.
\end{lemma}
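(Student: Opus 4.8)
The plan is to reduce the statement to two applications of Lemma \ref{supfirst}. Write $H=G\setminus u$. Since $\{u,v\}$ is a pendant edge of $G$ with $u$ the leaf, $G$ is the coalescence at $v$ of $H$ and the single edge $\{u,v\}$, and $G'$ is the coalescence at $v$ of $H$ and $T$ (where, abusing notation, $v$ also denotes the vertex of $T$ used in the coalescence). Applying Lemma \ref{supfirst} to the first decomposition, with $G_1=H$ and $G_2$ the edge $\{u,v\}$, and using that the edge has determinant $-1$ while deleting $v$ from it leaves the isolated vertex $u$ with determinant $0$, we get $\det(G)=-\det(H\setminus v)$ (this is also Corollary \ref{pe}). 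In particular, $G$ is nonsingular if and only if $\det(H\setminus v)\ne 0$.

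Next I would apply Lemma \ref{supfirst} to the second decomposition, with $G_1=H$ and $G_2=T$, obtaining
\[
\det(G')=\det(H)\det(T\setminus v)+\det(H\setminus v)\det(T).
\]
The main point is that $\det(T\setminus v)=0$: a nonsingular tree has a perfect matching, so $T$ has an even number of vertices, hence $T\setminus v$ is a forest on an odd number of vertices, which has no perfect matching and is therefore singular. (This holds for every vertex $v$ of $T$, so the particular choice made in the coalescence does not matter.) Substituting gives $\det(G')=\det(H\setminus v)\det(T)=-\det(G)\det(T)$.

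Finally, since $T$ is nonsingular, $\det(T)\ne 0$, so $\det(G')\ne 0$ if and only if $\det(G)\ne 0$, i.e., $G'$ is nonsingular if and only if $G$ is. The only step with any content is the vanishing of $\det(T\setminus v)$, which comes from the parity of $|V(T)|$ forced by the existence of a perfect matching; the remaining steps are routine computations with Lemma \ref{supfirst}.
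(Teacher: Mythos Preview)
Your proof is correct, and it takes a different route from the paper's. The paper argues via the PB-deletion machinery of Section~2: since $T$ is a nonsingular tree one may successively PB-delete pendant edges of $T$ inside $G'$ (each such edge has $\tau=1$, so Lemma~\ref{PBdel} with Remark~\ref{genPB} applies), until only a single pendant edge at $v$ remains, recovering $G$ itself; nonsingularity is preserved at every step. Your argument stays entirely within the determinant identities of Section~4: two applications of Lemma~\ref{supfirst} together with the parity observation that $T\setminus v$ is a forest on an odd number of vertices (hence singular) yield the clean formula $\det(G')=-\det(G)\det(T)$, from which the equivalence is immediate. Your approach is more self-contained relative to Section~4 and produces an explicit determinant relation as a bonus; the paper's approach ties the lemma back to the reduction framework that underlies the rest of the article.
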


\begin{proof}
Let $v$ be the coalescence vertex. Successively PB-delete  pendant edges of $T$, until
exactly a pendant edge at the vertex $v$ is left. This is possible, since in all
the steps up to the last, there are at least two pendant edges, one with both ends
different than $v$.
\end{proof}

Coalescing a block graph with a tree, and combining block graphs by coalescence or by an edge
yields block graphs. Thus the  results of this section imply the following for block graphs.

\begin{remark}{\rm
Let $G$  be a block graph.
\begin{itemize}
\item If $G$ has two pendant edges at the same cut vertex, then $G$ is singular (Corollary \ref{2penedges}).
\item The coalescence of two singular block graphs is singular (Corollary \ref{SL}).
\item If $G$ has a block, which is an edge $e$, then replacing this edge by a path of even order
results in a block graph $G'$, which is nonsingular if and only if $G$ is nonsingular (by Lemma \ref{G1-G2}, if
$e$ is a bridge, or Lemma \ref{NSpentree}, if $e$ is a pendant edge.)
In particular, this holds for any  tree of block graphs may be thus extended without affecting its singularity/nonsingularity,
and for the graphs in Theorems \ref{taueff+pendedge} and \ref{W1Ws}.
\item Pendant edges  the pendant edges may also be replaced by nonsingular trees without affecting singularity/nonsingularity
(e.g., in Theorems \ref{taueff+pendedge} and \ref{W1Ws}, see Figure \ref{p1g}).
\end{itemize}}
\end{remark}

\bibliographystyle{plain}
\bibliography{RN}

\end{document}